\DeclareMathOperator{\dist}{dist}
\newcommand{\parent}[1]{\accentset{\leftharpoonup}{#1}}
\newcommand{\child}[1]{\accentset{\rightharpoonup}{#1}}
\newcommand{\problem}[1]{\textsc{#1}}
\theoremstyle{plain}
\newtheorem{observation}[theorem]{Observation}
\title{Hierarchy of Transportation Network Parameters and Hardness Results}
\author{Johannes Blum}{University of Konstanz, Germany}{johannes.blum@uni-konstanz.de}{https://orcid.org/0000-0003-1102-3649}{}
\authorrunning{J. Blum}%
\keywords{Graph Parameters, Skeleton Dimension, Highway Dimension, $k$-Center}%
\begin{document}

\maketitle

\begin{abstract}
The graph parameters highway dimension and skeleton dimension were introduced to capture the properties of transportation networks. 
As many important optimization problems like \problem{Travelling Salesperson}, \problem{Steiner Tree} or \problem{$k$-Center} arise in such networks, it is worthwhile to study them on graphs of bounded highway or skeleton dimension.

We investigate the relationships between mentioned parameters and how they are related to other important graph parameters that have been applied successfully to various optimization problems. We show that the skeleton dimension is incomparable to any of the parameters distance to linear forest, bandwidth, treewidth and highway dimension and hence, it is worthwhile to study mentioned problems also on graphs of bounded skeleton dimension. Moreover, we prove that the skeleton dimension is upper bounded by the max leaf number and that for any graph on at least three vertices there are edge weights such that both parameters are equal.

Then we show that computing the highway dimension according to most recent definition is NP-hard, which answers an open question stated by Feldmann et al.~\cite{Fel15b}. Finally we prove that on graphs $G=(V,E)$ of skeleton dimension $\mathcal{O}(\log^2 \vert V \vert)$ it is NP-hard to approximate the \problem{$k$-Center} problem within a factor less than $2$.
\end{abstract}

\section{Introduction}
Many important optimization problems arise in the context of road or flight networks, e.g.\ \problem{Travelling Salesperson} or \problem{Steiner Tree}, and have applications in domains like route planning or logistics. Therefore, several approaches have been developed that try to exploit the special structure of such transportation networks. Examples are the graph parameters highway dimension and skeleton dimension. Intuitively, a graph has low highway dimension $hd$ or skeleton dimension $\kappa$, if there is only a limited number of options to leave a certain region of the network on a shortest path.
Both parameters were originally used in the analysis of shortest path algorithms and it was shown that if $hd$ or $\kappa$ are small, there are preprocessing-based techniques to compute shortest paths significantly faster than the algorithm of Dijkstra \cite{abr10, abr11, abr16, kos16}.

The highway dimension was also investigated in the context of NP-hard optimization problems, such as \problem{Travelling Salesperson} (\problem{TSP}), \problem{Steiner Tree} and \problem{Facility Location}~\cite{Fel15b}, \problem{$k$-Center}~\cite{DBLP:journals/algorithmica/Feldmann19, Fel18, Becker2018} or \problem{$k$-Median} and \problem{Bounded-Capacity Vehicle Routing}~\cite{Becker2018}. It was shown that in many cases, graphs of low highway dimensions allow better algorithms than general graphs. To our knowledge, the skeleton dimension has exclusively been studied in the context of shortest path algorithms so far. However, it was shown that real-world road networks exhibit a skeleton dimension that is clearly smaller than the highway dimension~\cite{blu18}. Moreover, in contrast to the highway dimension, it can be computed in polynomial time. Hence it is natural to study the aforementioned problems on networks of low skeleton dimension.

Further graph classes that have been used to model transportation networks are for instance planar graphs and graphs of low treedwidth or doubling dimension.
Moreover, many important optimization problems have been studied extensively for classic graph parameters like treewidth or pathwidth~\cite{DBLP:conf/icalp/Bodlaender88,DBLP:journals/bit/Arnborg85}. Still, there are only partial results on how the highway dimension $hd$ and skeleton dimension $\kappa$ are related to these parameters. This is the starting point of the present paper. A better understanding of the relationships between $hd$, $\kappa$ and different well-studied graph parameters will allow a deeper insight in the structure of transportation networks and might enable further algorithms custom-tailored for such networks.

\subsection{Related Work}
We now briefly sum up some algorithmic results in the context of optimization problems in transportation networks. Arora~\cite{Arora1998} developed a general framework that enables PTASs for several geometric problems where the network is embedded in the Euclidean plane. Building upon the work of Arora, Talwar~\cite{Talwar2004} developed QPTASs for \problem{TSP}, \problem{Steiner Tree}, \problem{$k$-Median} and \problem{Facility Location} on graphs of low \emph{doubling dimension} (for a formal definition, see Definition~\ref{def:ddim}). This was improved by Bartal et al.~\cite{Bartal2016}, who obtained a PTAS for \problem{TSP}. As the skeleton dimension of a graph upper bounds its doubling dimension (cf. Section~\ref{sec:skeleton-doubling-dimension}) the aforementioned results immediately imply a PTAS for \problem{TSP} and QPTASs for \problem{Steiner Tree}, \problem{$k$-Median} and \problem{Facility Location}.

The \problem{$k$-Center} problem is NP-complete on general graphs~\cite{Vazirani2001} and has been subject to extensive research. In fact, for any $\epsilon > 0$, it is NP-hard to compute a $(2-\epsilon)$-approximation, even when considering only planar graphs~\cite{Plesnik1980}, geometric graphs using $L_1$ or $L_\infty$ distances or graphs of highway dimension $\mathcal{O}(\log^2 \vert V \vert)$\cite{DBLP:journals/algorithmica/Feldmann19}. However, there is a fairly simple $2$-approximation algorithm for general graphs by Hochbaum and Shmoys~\cite{Hochbaum1986}.

One way to approximate \problem{$k$-Center} better than by a factor of $2$ is the use of so called \emph{fixed-parameter approximation algorithms (FPAs)}. The basic idea is to combine the concepts of fixed-parameter algorithms and approximation algorithms. Formally, for $\alpha > 1$, an $\alpha$-FPA for a parameter $p$ is an algorithm that computes an $\alpha$-approximation in time $f(p) \cdot n^{\mathcal{O}(1)}$ where $f$ is a computable function. Feldmann~\cite{DBLP:journals/algorithmica/Feldmann19} showed there is a $\nicefrac{3}{2}$-FPA for \problem{$k$-Center} when parameterizing both by the number of center nodes $k$ and the highway dimension $hd$. Later, Becker et al.\ \cite{Becker2018} showed that for any $\epsilon > 0$ there is a $(1+\epsilon)$-FPA for $k$-Center when parameterizing by $k$ and $hd$, using a slightly different definition for the highway dimension as in \cite{DBLP:journals/algorithmica/Feldmann19} (see also Section~\ref{sec:highway-dimension}). Moreover, on graphs of doubling dimension $d$, it is possible to compute a $(1 + \epsilon)$-approximation in time $\left(k^k / \epsilon^{\mathcal{O}(k\cdot d)}\right) \cdot n^{\mathcal{O}(1)}$~\cite{Fel18}. As the doubling dimension is a lower bound for the skeleton dimension $\kappa$, this implies a $(1+\epsilon)$-FPA for parameter $(\epsilon, k, \kappa)$. However, computing a $(2-\epsilon)$-approximation is $W[2]$-hard when parameterizing only by $k$, and unless the exponential time hypothesis (ETH) fails, it is not possible to compute a  $(2-\epsilon)$-approximation in time $2^{2^{o(\sqrt{hd})}} \cdot n^{\mathcal{O}(1)}$ for highway dimension $hd$~\cite{DBLP:journals/algorithmica/Feldmann19}.

\subsection{Contributions and Outline}
We first give an overview of various graph parameters, in particular we review several slightly different definitions of the highway dimension that can be found in the literature. Then we show relationships between skeleton dimension, highway dimension and other important parameters. Our results include the following.
\begin{itemize}
		\item The max leaf number $ml$ is a tight upper bound for the bandwidth $bw$. This improves a result of Sorge et al.\ who showed that $bw \leq 2 ml$~\cite{Sorge2019}.
		\item The skeleton dimension is incomparable to any of the parameters distance to linear forest, bandwidth, treewidth and highway dimension (when using the definitions from \cite{abr10} or \cite{abr11}).
		\item The skeleton dimension $\kappa$ is upper bounded by the max leaf number. Moreover, for any graph on at least $3$ vertices there are edge weights for which both parameters are equal. As the max leaf number is an upper bound for the pathwidth $pw$, it follows that $\kappa \geq pw$. This improves a result of Blum and Storandt, who showed that one can choose edge weights for any graph such that the skeleton dimension is at least $(pw - 1) / (\log_2 \vert V \vert + 2)$~\cite{blu18}.
\end{itemize}
The resulting parameter hierarchy is illustrated in Figure~\ref{fig:diagram}. In the second part of the paper we show hardness for two problems in transportation networks.
\begin{itemize}
	\item We show that computing the highway dimension is NP-hard when using the most recent definition from \cite{abr16}. This answers an open question stated in \cite{Fel15b}, where NP-hardness was only shown for the definitions used in \cite{abr10} and \cite{abr11}.
	\item We study the \problem{$k$-Center} problem in graphs of low skeleton dimension. We extend a result from \cite{DBLP:journals/algorithmica/Feldmann19} and show how graphs of low doubling dimension can be embedded into graphs of low skeleton dimension. It follows that for any $\epsilon > 0$ it is NP-hard to compute a $(2 - \epsilon)$-approximation on graphs of skeleton dimension $\mathcal{O}(\log^2 \vert V \vert)$.
\end{itemize}

\tikzstyle{para}=[rectangle,draw=black,minimum height=.8cm,fill=gray!10,rounded corners, on grid]
\tikzstyle{ref}=[rectangle,draw=black,fill=white,rounded corners=.5mm]
\tikzstyle{strict}=[very thick]
\tikzstyle{offset}=[very thick, dashed]
\tikzstyle{incomp}=[thick, dotted]
\definecolor{mygreen}{HTML}{7CCA89}
\tikzstyle{new}=[mygreen]
\newcommand{\tworows}[2]{\begin{tabular}{c}{#1}\\{#2}\end{tabular}}
\newcommand{\distto}[1]{\tworows{Distance to}{#1}}
\begin{figure}[p]
\begin{subfigure}[t]{\textwidth}
\centering
\begin{tikzpicture}[node distance=1.75cm and 3.5cm]
  \node[para] (ml) {Max Leaf \#};
  \node[para] (hd1) [right =7 of ml] {\tworows{Highway}{Dimension 1}};

  \node[para] (dl) [below left=of ml] {\distto{Linear Forest}}
  edge[strict] (ml);
  \node[para] (bw) [below =of ml] {Bandwidth} 
  edge[strict,new] (ml)
  edge[incomp, new] (dl);
  \node[para] (k) [below right=of ml] {\tworows{Skeleton}{Dimension}} 
  edge[strict,new] (ml)
  edge[incomp, new] (bw)
  edge[incomp, new, bend right=20] (dl);
  \node[para] (hd2) [right =of k] {\tworows{Highway}{Dimension 2}}
  edge[strict] (hd1)
  edge[incomp, new] (k)
  edge[incomp, new, bend right=20] (bw)
  edge[incomp, new, bend right=20] (dl);

  \node[para] (pw) [below =of bw] {Pathwidth} 
  edge[offset] (dl)
  edge[strict] (bw);
  \node[para] (md) [below =of k] {\tworows{Maximum}{Degree}} 
  edge[offset] (bw)
  edge[strict] (k);

  \node[para] (tw) [below =of pw] {Treewidth} 
  edge[strict] (pw)
  edge[incomp, new] (k);
  \node[para] (hi) [below =of md] {$h$-index} 
  edge[strict] (md)
  edge[incomp,new] (pw)
  edge[incomp] (tw);

  \node[para] (ac) [below =of tw, xshift=1.5cm] {\tworows{Acyclic}{Chromatic \#}}
  edge[strict] (hi)
  edge[offset] (tw);

  \node[para] (mind) [below =of ac] {\tworows{Minimum}{Degree}}
  edge[strict] (ac);
  \draw (mind.east) edge [incomp, out=0, in=290] (hd1.east);

  \node (strict) [left=27mm of ac,yshift=-0mm] {strict bound};
  \draw [strict] ([xshift=15mm] strict.center) -- ([xshift=25mm] strict.center);
  \node (offset) [below=2mm of strict] {general bound};
  \draw [offset] ([xshift=15mm] offset.center) -- ([xshift=25mm] offset.center);
  \node (incomp) [below=2mm of offset] {incomparable};
  \draw [incomp] ([xshift=15mm] incomp.center) -- ([xshift=25mm] incomp.center);
\end{tikzpicture}
\caption{Relationships between general graph parameters.}\label{fig:diagram_general}
\end{subfigure}

\vspace{1cm}

\begin{subfigure}[t]{\textwidth}
\centering
\begin{tikzpicture}[node distance=1.25cm and 3cm]
  \node[para] (delta) {$\Delta$};
  \node[para] [left of= delta] (ddim) {$ddim$};

  \node[para] (k) [above of=delta] {$\kappa$}
  edge[strict] (delta)
  edge[offset] (ddim);
  \node[para] (hd2) [right of=delta] {$hd_2$}
  edge[incomp, new] (k)
  edge[incomp, bend right=30] (ddim);

  \node[para] (hd3) [above of=k] {$hd_3$}
  edge[strict] (k)
  edge[strict] (hd2);
  \node[para] (hd1) [above =2.5 of hd2] {$hd_1$}
  edge[incomp] (delta)
  edge[incomp,bend right=20] (ddim)
  edge[strict] (hd2);

  \node[para] (hd2g) [above of=hd3] {$\widetilde{hd}_2$}
  edge[strict] (hd3);
  \node[para] [above of=hd1, xshift=15mm] {$hd_3(hd_3+1)$}
  edge[strict] (hd1);

  \node[para] (2hd3) [above of=hd2g] {$2 hd_3$}
  edge[strict] (hd2g);
  \node[para] [left of=2hd3, xshift=-7.5mm] {$(\Delta+1)hd_2$}
  edge[strict] (hd2g);

  \node[para] [right of=2hd3] (hd1g) {$\widetilde{hd}_1$}
  edge[strict] (hd2g)
  edge[strict] (hd1);

  \node[para] [above of=hd1g, xshift=-15mm] {$(\Delta+1)hd_1$}
  edge[strict] (hd1g);
  \node[para] [above of=hd1g, xshift=15mm] {$2hd_3(hd_3+1)$}
  edge[strict] (hd1g);
\end{tikzpicture}
\caption{Relationships between maximum degree $\Delta$, doubling dimension $ddim$, skeleton dimension $\kappa$ and different highway dimensions.}\label{fig:diagram_network}
\end{subfigure}
\caption{Relationships between graph parameters. New results are highlighted in green. Solid lines denote strict bounds (e.g.\ treewidth $\leq$ pathwidth), dashed lines denote general bounds (e.g.\ pathwidth $\leq$ distance to linear forest + 1). Dotted lines denote incomparabilities.}\label{fig:diagram}
\end{figure}
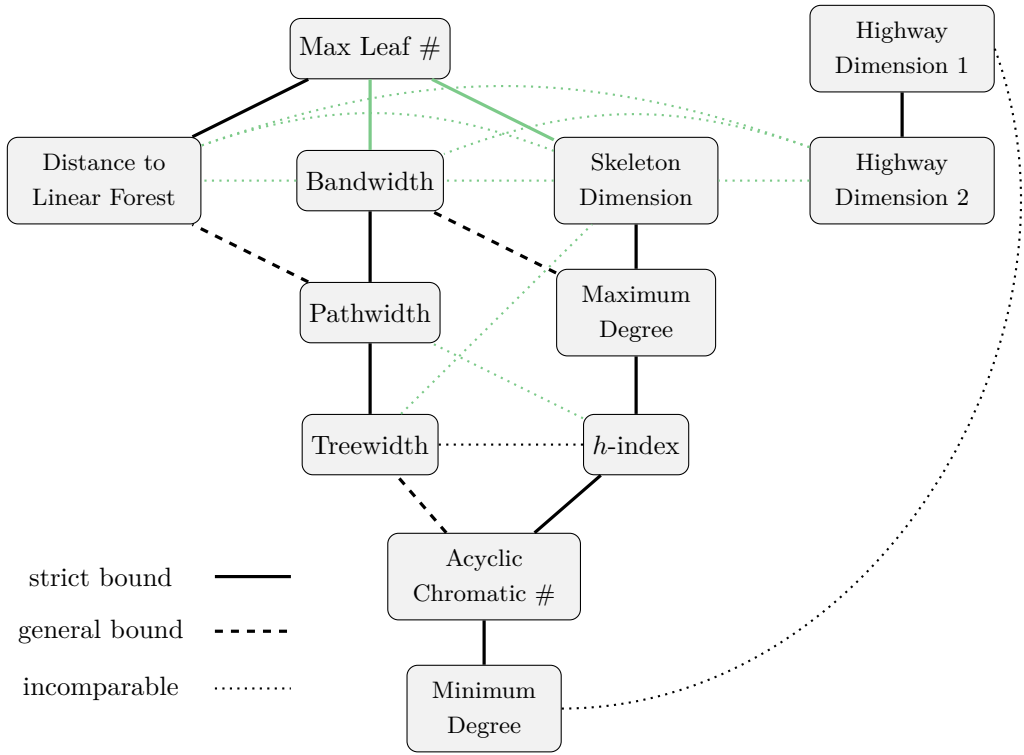
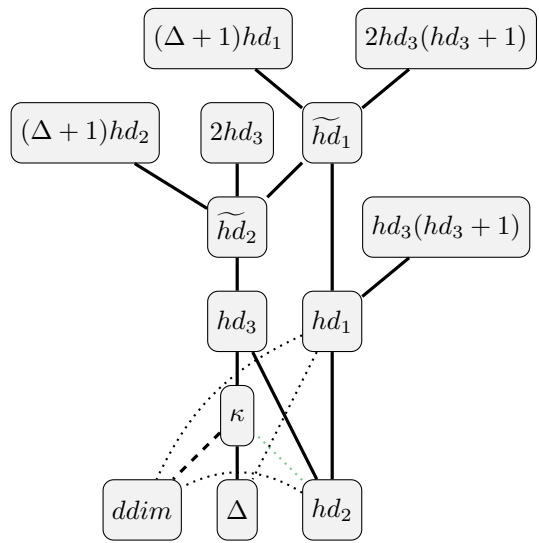

\section{Preliminaries}
We consider undirected graphs $G=(V,E)$ and denote the number of nodes and edges by $n$ and $m$, respectively. Let $\Delta$ be the maximum degree of $G$. For weighted graphs, let $\ell \colon E \rightarrow \mathbb{Q}^+$ be the cost function.  For nodes $u,v \in V$, let $\dist_G(u,v)$ (or simply $\dist(u,v)$) be length of the shortest path from $u$ to $v$ in $G$. A weighted graph $G = (V,E)$ is metric if $(V,\dist_G)$ is a metric, i.e.\ its edge weights satisfy the triangle inequality, that is for all nodes $u,v,w \in V$ we have $\dist(u,w) \leq \dist(u,v) + \dist(v,w)$.  We assume that the shortest path between any two nodes of $G$ is unique, which can be achieved e.g.\ by slightly perturbing the edge weights. For $u \in V$ and $r \in \mathbb{R}$, we define the ball around the node $u$ of radius $r$ as $B_r(u) = \{v \in V \mid \dist(u,v) \leq r \}$. The length of a path $\pi$ is denoted by $\vert \pi \vert$.

\subsection{Skeleton Dimension and Doubling Dimension}\label{sec:skeleton-doubling-dimension}
The skeleton dimension was introduced by Kosowski and Viennot to analyze the performance of hub labels, a route planning technique used for road networks~\cite{kos16}. To define it formally, we first need to introduce the geometric realization $\tilde G=(\tilde V, \tilde E)$ of a graph $G=(V,E)$ with edge weights $\ell$. Intuitively, $\tilde G$ is a continuous version of $G$, where every edge is subdivided into infinitely many infinitely short edges.
This means that $V \subseteq \tilde V$, for all $u,v \in V$ we have $\dist_{\tilde G} (u,v) = \dist_G(u,v)$ and for every edge $\{u,v\}$ of $G$ and every $0 \leq \alpha \leq \ell(\{u,v\})$ there is a node $w \in \tilde V$ satisfying $\dist(u,w) = \alpha$ and $\dist(w,v) = \ell(\{u,v\}) - \alpha$.

For a node $s \in V$ let $T_s$ be the shortest path tree of $s$ and let $\tilde{T_s}$ be its geometric realization. Recall that shortest paths are unique, and hence the same holds for $T_s$ and $\tilde{T_s}$. The skeleton $T^*_s$ is defined as the subtree of $\tilde{T_s}$ induced by the nodes $v \in \tilde V$ that have a descendant $w$ in $\tilde{T_s}$ satisfying $\dist(v,w) \geq \nicefrac{1}{2} \cdot \dist(s,v)$. Intuitively, we obtain $T^*_s$ by taking every shortest path with source $s$, cutting off the last third of the path and taking the union of the truncated paths. For a radius $r \in \mathbb{R}$ let $\mathrm{Cut}_s^r$ be the set of all nodes $u$ in $T^*_s$ satisfying $\dist(s,u) = r$.

\begin{definition}[Skeleton Dimension]
The skeleton dimension $\kappa$ of a graph $G$ is $\max_{s, r} \vert \mathrm{Cut}_s^r \vert$.
\end{definition}

Intuitively, a graph has low skeleton dimension, if for any starting node $s$ there are only a few main roads that contain the major central part of ever shortest path originating from $s$. %
Clearly, the skeleton dimension can be computed in polynomial time by computing the shortest path tree and its skeleton for every node $s \in V$ and determining $\mathrm{Cut}_s^r$ for every radius $r \in \mathbb{R}$. On large networks, a na\"ive implementation is still impracticable, but in~\cite{blu18} it was shown that it is possible to compute $\kappa$ even for networks with millions of vertices.

Related to the skeleton dimension is the doubling dimension, which was introduced as a generalization of several kinds of metrics, e.g.\ Euclidean or Manhattan metrics.

\begin{definition}[Doubling Dimension]\label{def:ddim}
A graph $G$ is $d$-doubling, if for any radius $r$, any ball of radius $r$ is contained in the union of $d$ balls of radius $\nicefrac{r}{2}$.
If $d$ is the smallest such integer, the doubling dimension of $G$ is $\log_2 d$.
\end{definition}

Computing the doubling dimension is NP-hard~\cite{Gottlieb2013}. Kosowski and Viennot showed that a graph with skeleton dimension $\kappa$ is $(2\kappa + 1)$ doubling \cite{kos16}.

\subsection{Highway Dimension}\label{sec:highway-dimension}

The highway dimension was introduced by Abraham et al.,\ motivated by the observation of Bast et al. that in road networks, all shortest paths leaving a certain region pass through one of a small number of nodes \cite{bas06,bas07}. In the literature, several slightly different definitions of the highway dimension can be found. The first one was given in \cite{abr10}.

\begin{definition}[Highway Dimension 1]\label{def:hd1}
The highway dimension of a graph $G$ is the smallest integer $hd_1$ such that for any radius $r$ and any node $u$ there is a hitting set $S \subseteq B_{4r}(u)$ of size $hd_1$ for the set of all shortest paths $\pi$ satisfying $\vert \pi \vert > r$ and $\pi \subseteq B_{4r}(u)$.
\end{definition}

In \cite{DBLP:journals/siamcomp/FeldmannFKP18,Fel18}, a generalized version of $hd_1$ was used, where balls of radius $c \cdot r$ for $c \geq 4$ were considered. It was observed that the highway dimension is highly sensitive to the chosen radius, i.e.\ there are graphs of highway dimension $1$ w.r.t.\ radius $c$ and highway dimension of $\Omega(n)$ w.r.t.\ radius $c' > c$.

In \cite{abr11} the highway dimension was defined as follows.

\begin{definition}[Highway Dimension 2]
The highway dimension of a graph $G$ is the smallest integer $hd_2$ such that for any radius $r$ and any node $u$ there is a hitting set $S \subseteq V$ of size $hd_2$ for the set of all shortest paths $\pi$ satisfying $2r \geq \vert \pi \vert > r$ that intersect $B_{2r}(u)$.
\end{definition}

The definition of $hd_1$ requires to hit all shortest paths contained in the ball of radius $4r$, while for $hd_2$ only the shortest paths intersecting the ball of radius $2r$ need to be hit. Hence, we have $hd_2 \leq hd_1$. Abraham et al.\ motivate their new definition with the fact that a smaller highway dimension can be achieved on real-world instances, while previous results still hold \cite{abr11}. Both previously defined highway dimensions are incomparable to the maximum degree and the doubling dimension \cite{abr10}.

In \cite{abr16}, a continuous version of the highway dimension $hd_2$ was introduced, which is based on the geometric realization. For the definition, assume w.l.o.g.\ that $\ell(e) \geq 1$ for all edges $e \in E$.

\begin{definition}[Continuous Highway Dimension]
The continuous highway dimension of a graph $G$ is the smallest integer $\widetilde{hd}_2$ such that for any radius $r \geq 1$ and any node $u \in \tilde V$ of the geometric realization $\tilde G$ there is a hitting set $S \subseteq V$ of size $\widetilde{hd}_2$ for the set of all shortest paths $\pi$ satisfying $2r \geq \vert \pi \vert > r$ that intersect $B_{2r}(u)$.
\end{definition}

Clearly, we have $hd_2 \leq \widetilde{hd}_2$. In \cite{kos16} it was observed that $\widetilde{hd}_2$ is upper bounded by $(\Delta + 1) hd_2$. Along the lines of Definition~\ref{def:hd1}, we can also introduce the continuous version $\widetilde{hd}_1$ of $hd_1$. It holds that $hd_1 \leq \widetilde{hd}_1 \leq (\Delta + 1) hd_1$ and moreover $\widetilde{hd}_2 \leq \widetilde{hd}_1$.
In \cite{abr16}, yet another definition of the highway dimension was given. It is based on the notion of $r$-significant shortest paths.

\begin{definition}[$r$-significant shortest path]
		For $r \in \mathbb{R}$, a shortest path $\pi = v_1 \dots v_k$ is $r$-significant iff it has an $r$-witness path $\pi'$, which means that $\pi'$ is a shortest path satisfying $\vert \pi' \vert > r$ and one of the following conditions hold: (i) $\pi' = \pi$, or (ii) $\pi' = v_0 \pi$, or $\pi' = \pi v_{k+1}$, or (iv) $\pi' = v_0 \pi v_{k+1}$ for nodes $v_0, v_{k+1} \in V$.
\end{definition}

In other words, $\pi$ is $r$-significant, if by adding at most one vertex to every end we can obtain a shortest path $\pi'$ of length more than $r$ (the $r$-witness). For $r,d \in \mathbb{R}$, a shortest path $\pi$ is $(r,d)$-close to a vertex $v$, if there is an $r$-witness path $\pi'$ of $\pi$ that intersects the ball $B_d(v)$.

\begin{definition}[Highway Dimension 3]
The highway dimension of a graph $G$ is the smallest integer $hd_3$ such that for any radius $r$ and any node $u$ there is a hitting set $S \subseteq V$ of size $hd_3$ for the set of all shortest paths $\pi$ that are $(r,2r)$-close to $u$.
\end{definition}

The advantage of the latest definition is that it also captures continuous graphs. In particular, it was shown that $hd_3 \leq \widetilde{hd}_2 \leq 2 hd_3$~\cite{abr16}. Hence there is no need for a continuous version of $hd_3$, apart from the fact that there is no meaningful notion of an $r$-witness in a continuous graph.

It can be easily seen that $hd_2 \leq hd_3$ as every shortest path $\pi$ that is longer than $r$ and intersects $B_{2r}(u)$ is also $(r,2r)$-close to $u$ (using $\pi$ itself as the $r$-witness). Moreover, the skeleton dimension $\kappa$ is a lower bound for $hd_3$, i.e.\ $\kappa \leq hd_3$ \cite{kos16}. Feldmann et al.\ showed that $hd_1 \leq hd_3(hd_3 + 1)$ \cite{Fel15b}. Combining their proof with \cite{abr16} yields that $\widetilde{hd}_1 \leq 2 hd_3(hd_3 + 1)$.

Computing the highway dimensions $hd_1$ and $hd_2$ is NP-hard~\cite{Fel15b}. In Section~\ref{sec:Hardness-hd3} we show that this also holds for $hd_3$, which answers an open question stated in~\cite{Fel15b}.

\subsection{Classic graph parameters}
We now provide an overview of several classic graph parameters. They are all defined on unweighted graphs, but we can also apply them to weighted graphs, simply neglecting edge weights.
We start with introducing the treewidth and the related parameters pathwidth and bandwidth.

\begin{definition}[Treewidth]
A tree decomposition of a graph $G = (V,E)$ is a tree $T = (\mathcal{X}, \mathcal{E})$ where every node (also called \emph{bag}) $X \in \mathcal{X}$ is a subset of $V$ and the following properties are satisfied: (i) $\bigcup_{X \in \mathcal{X}} X = V$, (ii) for every edge $\{u,v\} \in E$ there is a bag $X \in \mathcal{X}$ containing both $u$ and $v$, and (iii) for every $u \in V$, the set of all bags containing $u$ induce a connected subtree of $T$. The width of a tree decomposition $T = (\mathcal{X}, \mathcal{E})$ is the size of the largest bag minus one, i.e. $\max_{X \in \mathcal{X}} \left( \vert X \vert - 1 \right)$. The treewidth $tw$ of a graph $G = (V,E)$ is defined as the minimum width of all tree decompositions of $G$.
\end{definition}

\begin{definition}[Pathwidth]
A path decomposition of a graph $G$ is a tree decomposition of $G$ that is a path. The pathwidth $pw$ of $G$ is the minimum width of all path decompositions of $G$.
\end{definition}

It follows directly from the definitions, that the pathwidth is an upper bound for the treewidth and one can show that the minimum degree is a lower bound for the treewidth~\cite{Sorge2019}. The maximum degree $\Delta$ is incomparable to both treewidth and pathwidth, as for a square grid graph we have $\Delta = 4$ and $tw \in \Omega(\sqrt n)$ whereas for a star graph we obtain $\Delta \in \Omega(n)$ and $pw = 1$.

\begin{definition}[Bandwidth]
A vertex labeling of a graph $G = (V,E)$ is a bijection $f\colon V \rightarrow \{1, \dots, n\}$. The bandwidth of $G$ is the minimum of $\max\{\vert f(u) - f(v)\vert \colon \{u,v\} \in E\}$, taken over all vertex labelings $f$ of $G$.
\end{definition}

It was shown that the bandwidth $bw$ is a tight upper bound for the pathwidth~\cite{kap96}, and that $\Delta \leq 2 \cdot bw$~\cite{Sorge2019}.

\begin{definition}[Max Leaf Number]
The max leaf number $ml$ of a graph $G$ is the maximum number of leaves of all spanning trees of $G$.
\end{definition}

\begin{definition}[Distance to Linear Forest]
The distance to linear forest (also known as distance to union of paths) of a graph $G = (V,E)$ is the size of the smallest set $S \subseteq V$ that separates $G$ into a set of disjoint paths.
\end{definition}

\begin{definition}[$h$-Index]
The $h$-index of a graph $G = (V,E)$ is the largest integer $h$ such that $G$ has $h$ vertices of degree at least $h$.
\end{definition}

The max leaf number is closely related to the notion of a connected dominating set. It is an upper bound for several graph parameters. It was shown that for the max leaf number $ml$ and the distance to linear forest $dl$ we have $dl \leq ml-1$~\cite{DeLaVina2009}. We will show that it also upper bounds the bandwidth and the skeleton dimension. For distance to linear forest $dl$ and pathwidth $pw$ it is known that $pw \leq dl + 1$~\cite{Bodlaender1998}.
Clearly, the $h$-index is a lower bound for the maximum degree. It was shown that the $h$-index is incomparable to the treewidth~\cite{Sorge2019}.

\section{Parameter Relationships}
In this section we show relationships between skeleton dimension, highway dimension and other graph parameters. We will see that the max leaf number is an upper bound for the skeleton dimension and the bandwidth, whereas many of the remaining parameters are pairwise incomparable. This shows that they are all useful and worth studying.

\subsection{Upper Bounds}
We first relate the max leaf number to the skeleton dimension and the bandwidth. We will use the fact, that every tree has as least as many leaves as any subtree.%

\begin{observation}\label{obs:subtree}
Let $T'$ be a subtree of a tree $T$ and let $L$ and $L'$ be the leaves of $T$ and $T'$, respectively. Then we have $\vert L' \vert \leq \vert L \vert$.
\end{observation}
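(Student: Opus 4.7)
The plan is to induct on $|T| - |T'|$, where $|T|$ denotes the number of vertices of $T$. In the base case $|T| = |T'|$, connectedness of $T'$ forces $T' = T$, so $L = L'$ and the bound holds trivially.

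For the inductive step, suppose $T' \subsetneq T$. The key step is to exhibit a leaf $v$ of $T$ with $v \notin T'$. Granted such a $v$, the graph $T - v$ is still a tree, still contains $T'$ as a subtree, and has one fewer vertex than $T$, so the induction hypothesis gives $|L'| \leq |L(T-v)|$. Deleting a leaf from a tree changes the leaf count by at most zero: the vertex $v$ itself ceases to be a leaf, while its unique neighbor in $T$ becomes a new leaf only if it previously had degree exactly two. Hence $|L(T-v)| \leq |L|$, and combining the two inequalities yields the claim.

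To produce the leaf $v \notin T'$, I would consider the forest $F = T \setminus T'$ obtained by removing the vertices of $T'$ from $T$. By assumption $F$ is non-empty. Because $T$ is acyclic, each connected component $C$ of $F$ is joined to $T'$ by exactly one edge; let $x_C$ denote the endpoint of that edge lying in $C$. Fix any component $C$. If $|C| = 1$, then its unique vertex has only the single neighbor $x_C$ leads to in $T'$, so it is a leaf of $T$ outside $T'$. Otherwise $C$ has at least two leaves as a tree, and any leaf of $C$ distinct from $x_C$ has degree one in $T$ as well, since its only edges in $T$ are to other vertices of $C$.

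The only point needing care is the uniqueness of the edge connecting $C$ to $T'$: two such edges, combined with a path through $C$ and a path through the connected graph $T'$, would close into a cycle in $T$, contradicting that $T$ is a tree. Apart from this observation, every step is either elementary tree manipulation or a direct application of the induction hypothesis, so I expect no further obstacles.
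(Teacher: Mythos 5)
Your argument is correct and complete. Note that the paper states this as an unproved observation — it treats the fact that a subtree has no more leaves than the ambient tree as self-evident — so there is no proof in the paper to compare against. Your induction on $\vert V(T)\vert - \vert V(T')\vert$ works: the base case correctly forces $T' = T$ (equal vertex sets and $T'$ connected acyclic imply equal edge sets), and the inductive step is sound. The two ingredients you isolate are both right: (i) when $T' \subsetneq T$, some leaf of $T$ lies outside $T'$, which you establish cleanly by looking at a component $C$ of $T \setminus T'$, using acyclicity of $T$ to show $C$ attaches to $T'$ by a unique edge, and then picking a leaf of $C$ distinct from the attachment vertex (or the lone vertex when $\vert C\vert = 1$); and (ii) deleting a leaf from a tree does not increase the leaf count, since the removed vertex stops being a leaf while at most one new leaf (its former neighbor) can appear. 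This is somewhat more machinery than the claim strictly needs — one could also argue directly that the map sending each leaf $u$ of $T'$ to some leaf of $T$ in the component of $T \setminus (T' \setminus \{u\})$ containing $u$ is injective — but your version is airtight and handles the degenerate small cases correctly.
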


This allows to show that the max leaf number is an upper bound for the skeleton dimension.

\begin{theorem}\label{k-ml}
For the skeleton dimension $\kappa$ and the max leaf number $ml$ we have $\kappa \leq ml$. 
For any unweighted undirected graph on $n \geq 3$ nodes there are metric edge weights such that $\kappa = ml$.
\end{theorem}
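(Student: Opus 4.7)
The upper bound $\kappa \leq ml$ I would prove by relating the cut $\mathrm{Cut}_s^r$ to the leaves of the shortest path tree $T_s$, which is itself a spanning tree of $G$. Fix any source $s$ and radius $r$. For each point $u \in \mathrm{Cut}_s^r$ the definition of the skeleton supplies a descendant of $u$ in $\tilde T_s$ at distance at least $r/2$, and by continuing down along $T_s$ until we reach a vertex of $V$ with no children in $T_s$ I would pick a leaf $\lambda(u) \in V$ of $T_s$ lying below $u$. Since distinct $u, v \in \mathrm{Cut}_s^r$ sit at the same depth $r$ in $\tilde T_s$, their descendant subtrees are disjoint, so $\lambda(u) \neq \lambda(v)$; Observation~\ref{obs:subtree} applied to the subtree of $T_s$ spanned by $s$ and the chosen $\lambda(u)$'s then yields $\vert \mathrm{Cut}_s^r \vert \leq \vert \mathrm{leaves}(T_s) \vert \leq ml$.

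For the matching construction I would fix a spanning tree $T$ of $G$ achieving $ml$ leaves and pick $s$ to be any internal node of $T$, which exists because a tree on $n \geq 3$ vertices has at least one vertex of degree at least $2$. Rooting $T$ at $s$ and choosing constants $0 < \epsilon \ll D$ (say $D=1$ and $\epsilon = 1/n^2$), I would give every tree edge between two internal nodes the weight $\epsilon$ and every edge from an internal parent $p$ at hop-depth $k$ to a leaf child $\ell_i$ the weight $D - k\epsilon$, so that all $ml$ root-to-leaf paths in $T$ have total length exactly $D$. Non-tree edges $\{u,v\}$ would receive weight $\dist_T(u,v)$, and a final infinitesimal perturbation enforces unique shortest paths.

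With these weights any walk in $G$ that uses a non-tree edge $\{u,v\}$ can be replaced by the equally long tree path between $u$ and $v$, so $T$ is the unique shortest path tree from $s$ and each leaf $\ell_i$ lies at distance exactly $D$ from $s$ in $G$. The least common ancestor in $T$ of any two distinct leaves is an internal node at distance at most $(n-2)\epsilon < 2D/3$ from $s$, so at radius $r = 2D/3$ the $ml$ root-to-leaf paths have already branched apart, contributing $ml$ distinct points to $\mathrm{Cut}_s^r$; each such point $u_i$ has $\ell_i$ as a descendant at distance $D/3 = r/2$ and therefore lies in the skeleton $T^*_s$. This gives $\vert \mathrm{Cut}_s^r \vert \geq ml$ and, combined with the upper bound, $\kappa = ml$.

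The main obstacle I foresee is juggling three simultaneous constraints: making the weights metric, keeping $T$ the shortest path tree from $s$, and ensuring that every pair of root-to-leaf paths has already split before radius $2D/3$. Setting each non-tree edge to its $T$-distance takes care of the first two for free (any shortcut via such an edge exactly matches rather than beats the corresponding tree walk), while keeping internal-node depths of order $n\epsilon$ negligible compared to $D$ pushes all branching arbitrarily close to the root and handles the third.
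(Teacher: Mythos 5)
Your upper-bound argument matches the paper's: both count the cut $\mathrm{Cut}_s^r$ by exhibiting, below each cut point, a distinct leaf of the shortest-path tree $T_s$, which is a spanning tree and hence has at most $ml$ leaves. (The paper routes this through the leaves of $T_s^*$ and then Observation~\ref{obs:subtree}; you go directly to leaves of $T_s$ — the same reasoning.) For the tightness construction, your depth-compensating scheme, which forces every root-to-leaf path in $T$ to have length exactly $D$, is a valid alternative to the paper's choice (weight $2$ on leaf edges, weight $1/n$ on internal edges); what both constructions actually need is only that every leaf is far from $s$ and every internal vertex is close to $s$, so that all branching occurs before radius $\tfrac{2}{3}\cdot\dist(s,\text{leaf})$.

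However, there is a genuine gap in how you handle the non-tree edges. You set $\ell(\{u,v\}) = \dist_T(u,v)$ and then claim that "a final infinitesimal perturbation enforces unique shortest paths" while also asserting that "$T$ is the unique shortest path tree from $s$." These two requirements are in tension with metricity. A metric graph requires $\ell(\{u,v\}) \leq \dist_G(u,v)$ for every edge, so a non-tree edge cannot be perturbed upward; leaving it equal to $\dist_T(u,v)$ creates ties; and perturbing it downward (as the paper does, choosing $\ell(\{u,v\}) = \dist_T(u,v) - \epsilon$) in general makes a shortest path from $s$ route through a cheaper non-tree shortcut, so $T_s$ need not equal $T$. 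The paper explicitly concedes this — "The shortest path tree $T_s$ of $s$ may now differ from $T$" — and argues only that the leaves of $T$ remain leaves of $T_s$ at distance close to $2$ while internal vertices stay at distance less than $1$, which is enough to recover the cut. Your argument can be repaired in exactly the same way (the approximate equidistance of your leaves and the near-root branching survive a downward perturbation of order $1/n^2$), but the claim as written that $T$ is preserved as the unique shortest path tree is unjustified and, for metric weights, generally false.
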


\begin{proof}
		Let $G = (V,E)$ be a graph. Consider the skeleton $T_s^*$ of some node $s \in V$ that has a cut $C$ of size $\kappa$. As for any two distinct nodes $u,v \in C$ the lowest common ancestor in $T_s^*$ is distinct from $u$ and $v$, $T_s^*$ has at least $\kappa$ leaves. The skeleton $T_s^*$ is a subtree of the shortest path tree $T_s$ of $s$, so Observation~\ref{obs:subtree} implies that $T_s$ has at least $\kappa$ leaves. As $T_s$ is a spanning tree of $G$ it follows that $\kappa \leq ml$.

To show that the bound is tight, consider a spanning tree $T = (V, E_T)$ of an unweighted graph $G = (V,E)$ with $ml$ leaves. We choose edge weights $\ell$ such that the skeleton dimension of the resulting weighted graph equals $ml$. Let
\[
	\ell(\{u,v\}) = \begin{cases}
		2 & \text{if } \{u,v\} \in E_T \text{ and } u \text{ or } v \text{ is a leaf of } T\\
		\nicefrac{1}{n} & \text{if } \{u,v\} \in E_T \text{ and neither } u \text{ nor } v \text{ is a leaf of } T\\
		5 & \text{else}
	\end{cases}
\]
To examine the skeleton dimension of the resulting graph, consider an internal node $s$ of $T$. Such a node exists if $n > 2$. We observe that the shortest path tree $T_s$ of $s$ is equal to $T$ as for any vertex $v$ we have $\dist(s,v) < 3$, and hence no edge $e \in E \setminus E_T$ can be contained in $T_s$. Moreover, for any leaf $v$ we have $\dist(s,v) \geq 2$ and for any internal node $v$ we have $\dist(s,v) < 1$. Consider now the skeleton $T_s^*$.  Any leaf of $T_s^*$ has distance at least $\nicefrac{2}{3} \cdot 2 > 1$ from $s$.
As $T_s^*$ has $ml$ leaves, the cut of $T_s^*$ at radius $\nicefrac{4}{3}$ has size $ml$.

Note that in general, the resulting graph is not metric. To fix this, let $\dist_T(u,v)$ be the shortest path distance from $u$ to $v$ when applying the previously chosen edge weights. For $\{u,v\} \in E_T$ we define $\ell$ as previously, but for $\{u,v\} \not \in E_T$ choose $\ell(u,v) = \dist_T(u,v) -\epsilon$ where for every edge, $\epsilon$ is chosen from $(0,\nicefrac{1}{n^2})$ such that shortest paths are unique. 
Consider an internal node $s$ of $T$. The shortest path tree $T_s$ of $s$ may now differ from $T$, but the number of leaves of $T_s$ is still $ml$. For any leaf $v$ of $T$ we have now $\dist(s,v) > 2 - \nicefrac{n}{n^2} \geq \nicefrac{3}{2}$ and for any internal node $v$ we have $\dist(s,v) < 1$. Hence, the cut of $T_s^*$ at radius $1$ has size $ml$.
\end{proof}

As the max leaf number $ml$ is an upper bound for the pathwidth $pw$, it follows that for any graph $G$ on $n \geq 3$ nodes there are edge weights such that $\kappa \geq pw$. This improves a result of Blum and Storandt, who showed that there are edge weights such that $\kappa \geq (pw-1) / (\log_2 n + 2)$~\cite{blu18}.

Sorge et al.\ showed that the bandwidth can be upper bounded by two times the max leaf number~\cite{Sorge2019}. We slightly modify their proof to remove the factor of $2$ and show that the resulting bound is tight.

\begin{lemma}\label{bw-ml}
For the max leaf number $ml$ and the bandwidth $bw$ we have $bw \leq ml$. This bound is tight.
\end{lemma}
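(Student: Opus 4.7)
My plan is to refine the argument of Sorge et al., which obtains $bw \leq 2\, ml$, in order to save the factor of~$2$. At a high level, their construction fixes a spanning tree $T$ of $G$ with $ml$ leaves and builds a vertex labeling $f\colon V \to \{1,\dots,n\}$ from a traversal of~$T$. The factor of $2$ stems from the fact that their traversal enters each internal vertex of $T$ twice (essentially an Euler tour), and the goal is to replace this with a single-visit traversal.

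Concretely, I would fix a spanning tree $T$ attaining $ml$ leaves and order $V(G)$ along a DFS preorder of $T$ chosen so that each subtree is listed as a contiguous block. The intuition is that $T$ has only $ml$ leaves, so at any position in the ordering the set of ``open'' ancestors on the current root-to-leaf path that still have unvisited branches is bounded in terms of $ml$, giving tree-edge stretch at most $ml$.

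The principal obstacle is to handle edges of $G$ that do not lie in $T$, since these could in principle connect vertices far apart in the DFS order. I would address this by exploiting the maximality of $T$: any non-tree edge $\{u,v\}$ together with the $u$--$v$ path in $T$ forms a unique cycle, and if the endpoints were too far apart in the ordering, then swapping $\{u,v\}$ with a suitably chosen tree edge on this cycle would yield a spanning tree with strictly more than $ml$ leaves, contradicting the choice of $T$. I expect this swap step to be the most delicate part, as one must argue that a swap can always be chosen to strictly increase the leaf count whenever the non-tree edge spans more than $ml$ positions.

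For tightness, I would present a simple extremal family. The cycle $C_n$ satisfies $ml(C_n)=2$ since every spanning tree is a Hamiltonian path, and $bw(C_n)=2$ by the standard ``centered'' labeling, so equality $bw=ml$ holds. The complete graph $K_n$ furnishes another tight example: a star spanning tree achieves $ml(K_n) = n-1$, and trivially $bw(K_n)=n-1$. Either family witnesses that the bound cannot be improved.
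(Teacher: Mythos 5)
Your proposed approach has a genuine gap and differs fundamentally from the paper's proof. The central difficulty is that a DFS preorder of a max-leaf spanning tree does \emph{not} in general have tree-edge stretch bounded by $ml$. Consider a spider: a center $r$ with $k$ legs, each a path of length $m$, so $n = km + 1$ and $ml = k$. In any DFS preorder rooted at $r$ (or at a leg endpoint), the tree edge from $r$ to the first vertex of the $i$-th leg visited (for $i \geq 2$) must skip over the entirety of at least one previously-traversed leg of length $m$, yielding stretch at least $m = (n-1)/k$. When $k \ll \sqrt n$ this is far larger than $ml = k$, even though the true bandwidth of the spider is indeed about $\lceil k/2 \rceil$ (achieved by a ``centered'' layout that interleaves the legs, which is not a DFS preorder). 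The phrase ``the set of open ancestors is bounded'' counts branching nodes on the current root-to-leaf path, but that quantity is unrelated to the number of positions skipped by a tree edge, which is governed by sibling-subtree sizes. Your non-tree-edge swap argument is also only sketched and, as you note, the delicate part is left unresolved.

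The paper avoids this entirely by using a BFS tree $T$ rather than a max-leaf spanning tree and a DFS. It labels vertices by BFS discovery time, picks the edge $\{v_i,v_j\}$ of $G$ maximizing $j-i$, and observes that $v_i$ must be the BFS-parent of $v_j$ (else a lower-indexed neighbor of $v_j$ would give a larger gap). It then argues that $v_{i+1},\dots,v_j$ are all leaves of the subtree of $T$ induced by $\{v_1,\dots,v_j\}$, so $T$ has at least $j-i$ leaves, giving $bw \leq j-i \leq ml$. This cleverly bounds only the \emph{maximum} stretch edge by a leaf count rather than attempting to control the stretch of every edge. Your tightness examples ($K_n$, and also $C_n$) are correct and the $K_n$ case matches the paper's.
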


\begin{proof}
Let $T$ be a BFS tree of a graph $G = (V,E)$ and let $f\colon V \rightarrow \{1, \dots, n\}$ be a vertex labeling that assigns to every node the time of its BFS discovery. W.l.o.g.\ we assume that $f(v_i) = i$. Choose an edge $\{v_i, v_j\} \in E$ maximizing $f(v_j) - f(v_i)$. It follows that $bw \leq f(v_j) - f(v_i) = j-i$.
		
Observe that in the BFS tree $T$, the node $v_i$ is the parent of $v_j$ as by the choice of $\{v_i, v_j\}$ there is no $k < i$ such that $\{v_k, v_j\} \in E$. Consider the subtree $T'$ of $T$ induced by the nodes $\{v_1, \dots, v_j\}$. As $v_i$ is the parent of $v_j$ and nodes are ordered by their discovery time, it follows that $v_{i+1}, \dots, v_j$ are leaves of $T'$. Observation~\ref{obs:subtree} implies $T$ has at least $(j - i)$ leaves.

Tightness follows from the complete graph $K_n$ where $bw = ml = n-1$.
\end{proof}

\subsection{Incomparabilities}
We now show incomparabilities between several parameters, which means that they are all worth studying.
In \cite{Sorge2019} it was proven that the treewidth is incomparable to the $h$-index. We observe that the same holds for the pathwidth.
\begin{theorem}
The pathwidth and $h$-index are incomparable.
\end{theorem}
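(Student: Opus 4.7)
The plan is to exhibit two families of graphs: one family in which the pathwidth is bounded but the $h$-index grows without bound, and one family in which the $h$-index is bounded but the pathwidth grows without bound. This is exactly what is needed to show incomparability, and it parallels the strategy used for treewidth versus $h$-index in \cite{Sorge2019}.

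For the direction ``small pathwidth, large $h$-index'' I would use caterpillars. Specifically, for each $k \geq 1$ let $G_k$ be the graph obtained from a path $v_1 v_2 \cdots v_k$ by attaching $k$ additional leaves to every spine vertex $v_i$. Then each $v_i$ has degree at least $k$, so there are $k$ vertices of degree at least $k$, giving $h$-index at least $k$. On the other hand, $G_k$ is a caterpillar, and caterpillars have pathwidth $1$; an explicit path decomposition consists of one bag per spine vertex containing $v_i$, $v_{i+1}$, and successively one of the leaves attached to $v_i$, which is a standard construction. This gives an unbounded $h$-index with pathwidth fixed at $1$.

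For the direction ``small $h$-index, large pathwidth'' I would use the square grid. The $\sqrt n \times \sqrt n$ grid graph has maximum degree $4$, so every vertex has degree at most $4$ and therefore its $h$-index is at most $4$. On the other hand it is well known that the pathwidth (and even the treewidth) of the $\sqrt n \times \sqrt n$ grid is $\Theta(\sqrt n)$. So as $n$ grows, the pathwidth is unbounded while the $h$-index stays bounded by a constant.

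There is no real obstacle in either direction, since both constructions are standard and the bounds on pathwidth used are classical. The only small point to be careful about is verifying that the caterpillar really has pathwidth $1$ (which follows from the fact that a graph has pathwidth at most $1$ iff it is a caterpillar forest) and citing a known $\Omega(\sqrt n)$ lower bound for the pathwidth of the grid. Combining these two families yields the claimed incomparability.
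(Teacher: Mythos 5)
Your proof is correct and takes essentially the same approach as the paper: both directions use the same two examples, a square grid for bounded $h$-index with unbounded pathwidth, and a caterpillar with high-degree spine vertices for bounded pathwidth with unbounded $h$-index.
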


\begin{proof}
The $\sqrt n \times \sqrt n$ grid graph has pathwidth $\sqrt n$ and $h$-index at most $4$. The caterpillar tree with $d$ backbone vertices of degree $d$ has pathwidth $1$ and $h$-index $d$.
\end{proof}

We proceed with relating the highway dimensions $hd_1$ and $hd_2$ to the treewidth and pathwidth. In~\cite{DBLP:journals/siamcomp/FeldmannFKP18} it was observed that graphs of low highway dimension $hd_1$ do not have bounded treewidth, as the complete graph on vertex set $\{1,\dots,n\}$ with edge weights $\ell(\{i,j\}) = 4^{\max(i,j)}$ has highway dimension $hd_1 = 1$ and treewidth $n-1$.\footnote{The edge weights chosen in~\cite{DBLP:journals/siamcomp/FeldmannFKP18} are actually $\ell(\{i,j\}) = 4^{\min(i,j)}$, which results in a non-metric graph. Removing all edges that are not a shortest path yields a star graph of treewidth $1$.} 
The complete graph $K_n$ has indeed a minimum degree of $n-1$, which is a lower bound for the treewidth.
On the other hand, there are graphs of constant bandwidth and a linear highway dimension $hd_2$. For instance, consider a complete caterpillar tree on $b$ backbone vertices of degree $3$. Its bandwidth is $2$. Choose the weight of an edge as $\nicefrac{1}{n}$ if it is a backbone edge and as $1$ otherwise. Every edge of weight $1$ is a shortest path intersecting the ball of radius $1$ around some fixed backbone vertex and hence $hd_2 \geq b = \nicefrac{n}{2}-2$. This gives us the follows theorem.

\begin{theorem}\label{thm:hd-bw}
The highway dimensions $hd_1$ and $hd_2$ are incomparable to the bandwidth and the minimum degree.
\end{theorem}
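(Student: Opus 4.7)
The plan is to verify the two constructions sketched in the paragraph preceding the theorem. Each of the four incomparabilities ($hd_1$ vs.\ bandwidth, $hd_1$ vs.\ minimum degree, $hd_2$ vs.\ bandwidth, $hd_2$ vs.\ minimum degree) is witnessed by a single pair of graphs, exploiting that $hd_2 \leq hd_1$, so a graph with small $hd_1$ has small $hd_2$ and a graph with large $hd_2$ has large $hd_1$.

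For the first direction (small $hd_1$, large bandwidth, large minimum degree) I would take the complete graph $K_n$ on vertex set $\{1,\dots,n\}$ with edge weights $\ell(\{i,j\}) = 4^{\max(i,j)}$, as in \cite{DBLP:journals/siamcomp/FeldmannFKP18}. The rapid geometric growth of weights ensures that each edge is itself the unique shortest path between its endpoints, so the graph is metric, and the cited work gives $hd_1 = 1$. On the other hand $K_n$ has minimum degree $n-1$, and for any vertex labeling the edge between the vertices labelled $1$ and $n$ forces bandwidth $n-1$.

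For the second direction (small bandwidth, small minimum degree, large $hd_2$) I would take the caterpillar on $b$ backbone vertices of degree $3$ (so $n = \Theta(b)$), weighting each backbone edge by $1/n$ and each leaf edge by $1$. The leaves have degree $1$, so the minimum degree is $1$. A standard zigzag labeling that alternates each backbone vertex with its attached leaf gives bandwidth at most a small constant. For the $hd_2$ lower bound, fix an arbitrary backbone vertex $u$ and set $r = 1/2$. Then every leaf edge is a shortest path of length $1 \in (r, 2r]$, and its backbone endpoint lies in $B_{2r}(u) = B_1(u)$ because the total weight along the backbone is less than $1$. Hence every leaf edge must be hit. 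Since the leaf edges are essentially vertex-disjoint (each interior backbone vertex touches exactly one, and only the two endpoints of the backbone touch two), any hitting set has size $\Omega(b) = \Omega(n)$, and so $hd_2 = \Omega(n)$.

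The only nontrivial step is the $hd_2$ lower bound in the caterpillar: one must argue both that \emph{all} the leaf edges qualify as shortest paths to be hit for the chosen $u$ and $r$, and that no vertex hits more than a constant number of them. Everything else --- the metric property, the bandwidth estimate, the trivial upper bound on minimum degree, and the reduction from $hd_1$ to $hd_2$ via $hd_2 \leq hd_1$ --- is routine. Combining the two constructions then yields all four incomparabilities stated in the theorem.
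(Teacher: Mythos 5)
Your proposal is correct and takes the same route as the paper: the weighted complete graph $K_n$ with $\ell(\{i,j\})=4^{\max(i,j)}$ gives $hd_1=1$ alongside bandwidth and minimum degree $n-1$, while the degree-$3$ caterpillar with backbone edges of weight $1/n$ and leaf edges of weight $1$ gives constant bandwidth, minimum degree $1$, and $hd_2\in\Omega(n)$ via precisely your argument with $r=1/2$. The paper presents this justification in the paragraph preceding the theorem rather than in a separate proof environment, but the constructions and reasoning coincide with yours.
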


We would also like to relate the skeleton dimension to bandwidth and treewidth. On general graphs, it is easy to show, that the skeleton dimension is incomparable to the other two parameters. For instance, a star graph has treewidth $1$ and linear skeleton dimension, whereas a complete graph has linear treewidth, but we can choose edge weights such that the shortest path tree of every vertex becomes a path which implies a constant skeleton dimension. However, by choosing such weights for the latter graph, most edges become useless as they do not represent a shortest path and removing all unnecessary edges produces a graph of low treewidth. Still, we can show, that even on metric graphs the skeleton dimension is incomparable to both bandwidth and treewidth.

\begin{theorem}
On metric graphs the skeleton dimension and the bandwidth are incomparable.
\end{theorem}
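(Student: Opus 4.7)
The plan is to exhibit two families of metric graphs, one witnessing each direction of the incomparability.

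For the direction showing that $\kappa$ can grow while $bw$ remains bounded, I would take a caterpillar $G_n$ with unit-weight backbone $v_0, v_1, \dots, v_{2n}$ and, for each $j \in \{0, 1, \dots, \lfloor n/2 \rfloor\}$, a single leaf $\ell_j$ attached to $v_j$ via an edge of weight $w_j := 2(n - j)$. The graph is a tree and hence metric, and interleaving the backbone with its leaves via the labeling $v_0, \ell_0, v_1, \ell_1, \dots, v_{\lfloor n/2 \rfloor}, \ell_{\lfloor n/2 \rfloor}, v_{\lfloor n/2 \rfloor + 1}, \dots, v_{2n}$ gives $bw(G_n) \leq 2$. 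To lower-bound $\kappa$, I would inspect the skeleton rooted at $v_0$ at radius $r = n$: for each $j \leq n/2$ the midpoint of the edge $v_j \ell_j$ lies at distance $j + w_j/2 = n$ from $v_0$, and its descendant $\ell_j$ is at distance $w_j/2 = n - j \geq n/2 = r/2$, so the midpoint belongs to the skeleton. Combined with the backbone cut point at $v_n$, this yields $|\mathrm{Cut}_{v_0}^n| \geq \lfloor n/2 \rfloor + 2$, so $\kappa(G_n) \in \Omega(n)$ while $bw(G_n) \leq 2$.

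For the opposite direction I would take the weighted complete graph $H_n$ on vertices $\{v_0, \dots, v_{n-1}\}$ with edge weights $\ell(v_i, v_j) := |i - j|$. The inequality $|i - k| \leq |i - j| + |j - k|$ gives the triangle inequality on edge weights, so $H_n$ is metric; and since $H_n$ is complete, every labeling has an edge of span $n - 1$, giving $bw(H_n) = n - 1$. For every non-adjacent pair with $|i - j| \geq 2$ the chain $v_i v_{i+1} \cdots v_j$ has total weight $|i - j|$, equal to the direct edge weight. Fixing a chain-favouring tiebreaking rule for the paper's uniqueness-of-shortest-paths convention then makes the shortest path tree rooted at each $v_i$ the straight chain rerooted at $v_i$, whose cuts have size at most two. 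Hence $\kappa(H_n) \leq 2$ and the ratio $bw/\kappa$ is unbounded.

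The main obstacle lies in the second construction: any weight perturbation that strictly favours chains violates the triangle inequality on the longer direct edges, whereas any metric-preserving perturbation instead strictly favours the direct edges and turns every shortest path tree into a star. I would therefore interpret the paper's ``unique shortest path'' assumption via a fixed combinatorial tiebreaking rather than through an infinitesimal perturbation of edge weights; this mirrors the device implicit in the non-metric complete-graph sketch preceding the theorem, with the crucial difference that every edge of $H_n$ now realizes the distance between its endpoints, so the bandwidth of $n - 1$ genuinely belongs to the metric structure and is not an artifact of removable redundant edges.
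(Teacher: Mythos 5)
Your first construction is sound and in the same spirit as the paper's: the paper also uses a caterpillar tree with edge weights tuned so that many leaf edges cross a common cut of the skeleton, giving constant bandwidth and linear skeleton dimension. Your variant (interleaved labeling, weights $w_j = 2(n-j)$ so the midpoints of the leaf edges align at radius $n$) is a perfectly good realization of that idea.

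Your second construction, however, has a genuine gap that you already half-diagnose. The skeleton dimension is defined with respect to the actual shortest path tree of the weighted graph, under the standing assumption that shortest paths are unique --- which the paper achieves by perturbing weights, i.e.\ by considering weight assignments for which shortest paths \emph{are} unique. In your $H_n$ every direct edge $\{v_i,v_j\}$ has length exactly $\dist(v_i,v_j)$, and for $|i-j|\geq 2$ the chain and the direct edge are tied. Any metric-preserving perturbation necessarily keeps $\ell(v_i,v_j)\leq \ell(v_i,v_k)+\ell(v_k,v_j)$, and generically makes this strict, so the unique shortest path between any pair is the direct edge; the shortest path tree at every vertex becomes a star and the skeleton dimension is $\Theta(n)$, not $O(1)$. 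Declaring a ``chain-favouring combinatorial tiebreak'' does not fix this, because it replaces the shortest path tree by a tree that is \emph{not} a shortest path tree of any metric weighting close to $H_n$; you would be computing a different parameter, not the skeleton dimension of a metric graph. There is simply no metric weighting of the complete graph for which the chains beat the direct edges. The paper avoids this trap by using a tree for the second direction --- the complete binary tree $B_{2d+1}$ with edge weights $3^{-j}$ between levels $j$ and $j+1$: trees are automatically metric and have unique shortest paths, $B_{2d+1}$ has pathwidth $d$ (hence bandwidth $\Omega(\log n)$), and the geometric-series weights force every skeleton cut to have size at most $3$. That gives bandwidth $\omega(1)$ together with $\kappa\leq 3$, which suffices for incomparability; your aim of bandwidth $n-1$ with bounded $\kappa$ is not attainable via the complete-graph route, for exactly the metric reason you identified.
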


\begin{proof}
Consider the complete caterpillar tree on $b$ backbone vertices of degree $3$. It has a bandwidth of $2$. Set the weight of every backbone edge to $1$ and pick an arbitrary backbone vertex $v$. For the remaining edges, choose edge weights such that all leaves have the same distance $d \geq 2$ from $v$. It follows that the skeleton dimension of the weighted caterpillar tree equals the number of leaves which is $b + 2 = \nicefrac{n}{2}+1$.

The complete binary tree $B_{2d + 1}$ of depth $2d+1$ has pathwidth $d$ \cite{cat96}. We show that there are edge weights for $B_{2d + 1}$ such that the skeleton dimension is at most $3$. Let $s$ be the root of $B_{2d + 1}$.  We call the depth of a vertex in the tree also its \emph{level} and choose the weight of an edge $\{v,w\}$ as $\ell(\{v,w\}) = 3^{-j}$ if $v$ and $w$ are level $j$ and level $(j+1)$ vertices, respectively.

Let $v$ be a level $i$ vertex.  We show that for any radius $r$ we have $\vert \mathrm{Cut}_v^r \vert \leq 3$.
Clearly the shortest path $\pi$ form $v$ to $s$ is contained in the skeleton $T^*_v$ of the shortest path tree $T_v$ as the root $s$ has a descendant $w$ satisfying $\dist(s,w) \geq \nicefrac{1}{2} \cdot \dist(v,s)$. For $0 \leq j \leq i$ let $v_j$ be the unique level $j$ vertex on the path $\pi$ and let $w$ be a descendant of some $v_j$ such that the shortest path from $v_j$ to $w$ is edge-disjoint from $\pi$. 
Assume that the vertex $w$ is contained in the skeleton $T^*_v$. This means that $w$ has a descendant $w'$ such that $\dist(w,w') \geq \nicefrac{1}{2} \cdot \dist(v,w)$. As moreover $\dist(v_j,w) = \dist(v_j,w') - \dist(w,w')$ and $\dist(v_j,w) \leq \dist(v,w)$, it follows that $\nicefrac{3}{2} \cdot \dist(v_j,w) \leq \dist(v_j,w')$ which implies $\dist(v_j,w') \leq \nicefrac{2}{3} \sum_{x=j}^{2d+1} 3^{-x} < \nicefrac{2}{3} \sum_{x=j}^\infty 3^{-x} = \frac{2}{3} \cdot \frac{3^{-j+1}}{2} = 3^{-j}$.

To bound the size of $\mathrm{Cut}_v^r$, consider a radius $r > 0$ and let $y$ be the node in shortest path from $v$ to the root $s$ that maximizes $\dist(v,y)$ while satisfying $r' := \dist(v,y) \leq r$. Let $j$ be the level of $y$. From our previous observation it follows that $y$ is the only vertex that has distance $r'$ from $v$ and is contained in the skeleton $T^*_v$. Moreover, there are at most three vertices at distance $r - r'$ from $y$. It follows that $\vert \mathrm{Cut}_v^r \vert \leq 3$ and that $B_{2d + 1}$ has skeleton dimension $\kappa \leq 3$.
\end{proof}

\begin{theorem}
On metric graphs the skeleton dimension and the treewidth are incomparable.
\end{theorem}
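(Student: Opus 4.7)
The plan is to exhibit two families of metric graphs. For (low treewidth, high skeleton dimension) I reuse the complete weighted caterpillar from the previous theorem: as a tree it has $tw = 1$ and is automatically metric, while the earlier construction already shows $\kappa = \Theta(n)$.

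For (high treewidth, bounded skeleton dimension) I start from the weighted complete binary tree $B_{2d+1}$ with edge weights $3^{-j}$ between levels $j$ and $j+1$ used in the previous theorem, for which $\kappa \leq 3$ from every vertex. I then augment it by adding, for every pair of non-sibling leaves $u$ and $v$, an edge of weight $\dist_T(u,v) + \epsilon$ for a sufficiently small $\epsilon > 0$, where $\dist_T$ is the tree distance. The induced subgraph on the $L = 2^{2d+1}$ leaves is then $K_L$ minus the perfect sibling matching, i.e.\ the cocktail-party graph on $L$ vertices, whose treewidth is $L - 2 = \Theta(n)$, so the treewidth of the augmented graph is unbounded.

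What remains is checking metricity and the $\kappa \leq 3$ bound. For metricity, a short case analysis shows that the only triples of vertices whose three pairs are all edges are triples of non-sibling leaves: any combination mixing a tree edge with an added edge would force either an internal vertex to be a leaf or two siblings to be joined by an added edge. On such leaf triples the triangle inequality $\dist_T(u,v) + \epsilon \leq \dist_T(u,w) + \dist_T(w,v) + 2\epsilon$ follows immediately from the tree's own metric. For $\kappa$ I argue that no added edge lies on any shortest path: any detour through an added edge $\{u,v\}$ costs an extra $\epsilon$ over the tree route, so the unique shortest-path tree from any source $s$ is exactly $B_{2d+1}$ rerooted at $s$, and the bound $\kappa \leq 3$ transfers verbatim from the previous theorem.

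The main obstacle is the metricity case analysis; once one verifies that tree edges and added edges cannot combine to close a triangle, the $+\epsilon$ penalty makes the augmentation invisible to shortest paths and both bounds follow immediately.
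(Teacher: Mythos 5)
Your first direction is fine (a weighted tree is automatically metric and the caterpillar/star argument carries over), but the second direction has a genuine gap rooted in what ``metric'' means here.

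The point of the metric restriction, as the paper states just before this theorem, is to rule out graphs where ``most edges become useless as they do not represent a shortest path and removing all unnecessary edges produces a graph of low treewidth.'' Being metric means every edge weight equals the shortest-path distance between its endpoints, i.e.\ no edge can be strictly shortcut by a path. Your added edges have weight $\dist_T(u,v)+\epsilon$, while the tree already realizes $\dist_G(u,v)=\dist_T(u,v)$; so every added edge is strictly longer than the shortest path between its endpoints. These are precisely the ``useless'' edges the theorem is designed to forbid: delete them and you are back to a tree of treewidth~$1$. Checking the triangle inequality only on actual triangles of the graph (your ``short case analysis'') is the wrong test --- it is much weaker than requiring every edge to be a shortest path, and under that weaker test the restriction to metric graphs would not do the work the paper needs it to do. Note also that the paper's own construction $G'$ (subdivided grid plus Hamiltonian path) contains no triangles whatsoever, yet the weights are still carefully tuned so that the long edges $\{x_{uv},y_{uv}\}$ \emph{are} the unique shortest paths between their endpoints; that is the nontrivial part your construction sidesteps.

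The obstruction is not cosmetic. If you lower the added edge weights to $\dist_T(u,v)$ you lose uniqueness of shortest paths, and if you lower them further to $\dist_T(u,v)-\epsilon$ the added edges become shortest paths, so the shortest-path trees from the leaves become star-like and the $\kappa\le 3$ argument collapses. Getting extra edges that are simultaneously genuine shortest paths and invisible to the skeletons is exactly the difficulty the paper's subdivided-grid construction is built to overcome, and your augmentation of $B_{2d+1}$ does not address it.
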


\begin{proof}
The star graph $S_n$ on $n$ vertices has treewidth $1$ and skeleton dimension $n-1$.

We now construct a graph of treewidth $\Omega(\sqrt n)$ and constant skeleton dimension. Consider a square grid graph $G$ on the vertex set $V = \{v_1, \dots, v_n\}$. Subdivide every edge $\{u,v\}$ by inserting two vertices $x_{uv}$ and $y_{uv}$, i.e.\ replace the edge $\{u,v\}$ through a path $u\,x_{uv}\,y_{uv}\,v$. Connect the vertices $v_1, \dots, v_n$ through a path $P$ and denote the resulting graph by $G' = (V', E')$. The original grid graph $G$ has treewidth $\sqrt n$ and is a minor of $G'$. Hence, $G'$ has treewidth $\Omega(\sqrt n)$.

We now choose edges weights for $G'$ resulting in a constant skeleton dimension. For every edge $e$ that is part of the path $P$, let $\ell(e) = 1$. Consider an edge $\{u,v\}$ of $G$ that was replaced by the path $u\,x_{uv}\,y_{uv}\,v$ and denote the shortest path distance between $u$ and $v$ on the path $P$ by $\dist_P(u,v)$. We choose $\ell(\{u,x_{uv}\}) = \ell(\{y_{uv},v\}) = 1$ and $\ell(\{x_{uv},y_{uv}\}) = \dist_P(u,v) + \nicefrac{1}{2}$. It is easy to verify that the resulting graph is metric.

To bound the skeleton dimension, we use the following claim: For every edge $\{u,v\}$ of $G$, neither of the shortest paths from $u$ to $x_{uv}$ or from $v$ to $y_{uv}$ contains the edge $\{x_{uv},y_{uv}\}$. To prove the claim, observe that by concatenating the subpath of $P$ between $u$ and $v$ and the edge $\{v,y_{uv}\}$, we obtain a path of length $\dist_P(u,v) + 1$. Any path from $u$ to $y_{uv}$ containing the edge $\{x_{uv},y_{uv}\}$ has length $\dist_P(u,v) + \nicefrac{3}{2}$. The case of $v$ and $x_{uv}$ is symmetric.

It follows that in $G'$ the shortest path tree of a vertex $s$ cannot contain the edge $\{x_{uv},y_{uv}\}$ unless $s \in \{x_{uv},y_{uv}\}$, as any subpath of a shortest path must be a shortest path itself.
Consider the shortest path tree $T_s$ of some vertex $s \in V$. The previous claim implies that $T_s$ is a caterpillar tree where $P$ is the backbone path. Moreover, $T_s$ ha maximum degree $\Delta \leq 6$ and all edges have unit length. 
Let $r > 0$ and consider the set $\mathrm{Cut}_s^r$. For $r \leq 1$, the set $\mathrm{Cut}_s^r$ intersects only edges incident to $s$ and hence $\vert \mathrm{Cut}_s^r \vert \leq 6$. For $1 < r \leq 2$, the set $\mathrm{Cut}_s^r$ intersects only edges incident to the two neighbors of $s$ on $P$, which implies $\vert \mathrm{Cut}_s^r \vert \leq 10$. Finally, for $r > 2$ we have $\vert \mathrm{Cut}_s^r \vert \leq 2$ because for any vertex $v \not \in \tilde P$, the distance to its furthest descendant is less than $1 < r/2$ and hence, the set $\mathrm{Cut}_s^r$ intersects only edges from the path $P$. Similarly, it can be shown that $\vert \mathrm{Cut}_s^r \vert \leq 6$ if $s \not \in V$ (i.e. $s = x_{uv}$ or $s = y_{uv}$). It follows that the skeleton dimension of $G'$ is $\kappa \leq 10$.
\end{proof}

So far, it was only known that there can be an exponential gap between skeleton and highway dimension~\cite{kos16}. However, we can use the graph $G'$ from the previous proof to show that the skeleton dimension and the highway dimensions $hd_1$ and $hd_2$ are incomparable. Let $\{v^{1,1}, \dots, v^{q,q}\}$ be the vertex set of the original grid graph and choose the path $P$ used in the construction of $G'$ as $v^{1,1} \dots v^{1,q}\,v^{2,1} \dots v^{2,q} \dots v^{q,1} \dots v^{q,q}$.
In the resulting graph $G'$, for $i \in \{1,\dots,q\}$, the shortest path from $v_{1,i}$ to $v_{2,i}$ has length $q$ and hence the edge $e_i = \{x_{v^{1,i},v^{2,i}},y_{v^{1,i},v^{2,i}}\}$ has length $q+\frac{1}{2}$. As any edge of $\{e_1, \dots, e_q\}$ intersects the ball around $v^{1,1}$ of radius $2q$ and no two of this edges share a common vertex, the highway dimension $hd_2$ of $G'$ is at least $q = \sqrt n$. The star graph on $n$ vertices with unit edge weights has a skeleton dimension of $n-1$ and a highway dimension $hd_1$ of $1$, so we obtain the following corollary.

\begin{corollary}
The skeleton dimension is incomparable to both highway dimensions $hd_1$ and $hd_2$.
\end{corollary}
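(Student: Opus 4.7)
The plan is to prove both directions of incomparability separately, each by exhibiting an explicit family of graphs, and to lean heavily on the construction already introduced in the proof of the preceding theorem.

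For the first direction I would take the star graph $S_n$ on $n$ vertices with unit edge weights and show that $\kappa = n-1$ while $hd_1 = 1$. The skeleton bound is immediate since the shortest path tree rooted at the center is $S_n$ itself and every leaf lies in $\mathrm{Cut}_s^{1/2}$. For the highway dimension, every shortest path has length at most $2$, and for any radius $r$ one checks that all shortest paths of length exceeding $r$ that lie inside $B_{4r}(u)$ share a common vertex (either the center, when $r < 2$, or trivially no such path exists for $r \geq 2$), so $hd_1 \leq 1$. Since $hd_2 \leq hd_1$, this single example shows $\kappa$ can be arbitrarily larger than both highway dimensions.

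For the converse direction I would reuse the graph $G'$ constructed in the previous proof with the path $P$ ordered as $v^{1,1}\dots v^{1,q}\,v^{2,1}\dots v^{2,q}\dots v^{q,1}\dots v^{q,q}$, where $q = \sqrt{n}$. The previous theorem already gives $\kappa \leq 10$, so all that remains is to lower bound $hd_2$ by $q$. For each $i \in \{1,\dots,q\}$, the grid edge between $v^{1,i}$ and $v^{2,i}$ was subdivided as $v^{1,i}\,x_i\,y_i\,v^{2,i}$ where, by the weight choice, $\ell(\{x_i,y_i\}) = \dist_P(v^{1,i},v^{2,i}) + \tfrac{1}{2} = q + \tfrac{1}{2}$. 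I would verify that the single-edge path $e_i = \{x_i,y_i\}$ is a shortest path (which follows from the metric claim shown in the preceding proof), has length in the interval $(q, 2q]$, and intersects $B_{2q}(v^{1,1})$. The crucial feature is that the $q$ edges $e_1,\dots,e_q$ are pairwise vertex-disjoint, so any hitting set for shortest paths of length in $(q, 2q]$ intersecting $B_{2q}(v^{1,1})$ must contain at least one endpoint of each $e_i$, forcing $hd_2 \geq q = \sqrt{n}$. Since $hd_2 \leq hd_1$, both highway dimensions grow without bound while $\kappa$ stays constant, which together with the first construction yields the corollary.

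The main obstacle is verifying pairwise vertex-disjointness of the $e_i$ together with the membership in $B_{2q}(v^{1,1})$: the endpoints $x_i,y_i$ are fresh subdivision vertices that appear in no other $e_j$, and each $x_i$ is reachable from $v^{1,1}$ via $P$ through at most $q-1$ unit-weight backbone steps plus one unit edge, giving distance at most $q \leq 2q$. The range condition $q < q+\tfrac{1}{2} \leq 2q$ holds for all $q \geq 1$, so each $e_i$ is a genuine witness for the hitting set lower bound. No other case analysis is required, and the corollary follows.
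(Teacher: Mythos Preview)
Your proposal is correct and follows essentially the same approach as the paper: the star graph $S_n$ with unit weights for the direction $\kappa \gg hd_1 \geq hd_2$, and the subdivided grid graph $G'$ from the preceding theorem (with the same row-by-row ordering of $P$) together with the $q$ pairwise vertex-disjoint middle edges $e_i$ intersecting $B_{2q}(v^{1,1})$ for the direction $hd_1 \geq hd_2 \geq \sqrt{n}$ while $\kappa \leq 10$. You spell out a few verifications (that each $e_i$ is itself a shortest path, that $x_i \in B_{2q}(v^{1,1})$, and the range condition on $|e_i|$) more explicitly than the paper does, but the argument is the same.
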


Finally it can be shown that the distance to linear forest $dl$ is incomparable to the bandwidth $bw$, the skeleton dimension $\kappa$ and the highway dimensions $hd_1$ and $hd_2$. For instance, a caterpillar tree of constant maximum degree has a distance to linear forest of $\Omega(n)$, but constant bandwidth, skeleton dimension and highway dimensions (for suitably chosen edge weights), whereas there are star-like graphs for which $dl \in \mathcal{O}(1)$ and $bw, \kappa, hd_1, hd_2 \in \Omega(n)$.

\begin{theorem}
The distance to linear forest is incomparable to the bandwidth, the skeleton dimension and the highway dimensions $hd_1$ and $hd_2$.
\end{theorem}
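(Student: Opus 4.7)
The plan is to exhibit two families of graphs, one witnessing each direction of the incomparability claims, following the two example types sketched in the preceding paragraph.

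For the direction where $dl$ is large but the other parameters are small, I would take the caterpillar $G_1$ on $n = 2k$ vertices consisting of a backbone path $v_1 v_2 \ldots v_k$ with one leaf $u_i$ attached to each $v_i$, equipped with unit edge weights. The labeling $v_1, u_1, v_2, u_2, \ldots$ immediately gives $bw(G_1) \leq 2$. Since $G_1$ is a tree of maximum degree $3$ and every leaf contributes only a short stub to the skeleton of any source (the leaf edge is cut off after a length depending on its distance to the source, decaying quickly along the backbone), a direct calculation yields $\kappa(G_1) = O(1)$. For the highway dimensions, the key observation is that any ball in $G_1$ contains only an $O(r)$-long segment of the backbone together with $O(r)$ leaves, and placing $O(1)$ backbone vertices spaced by roughly $r$ inside $B_{4r}(u)$ hits every shortest path of length exceeding $r$ that is contained in (respectively meets) the relevant ball; this bounds both $hd_1$ and $hd_2$ by a constant. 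To lower bound $dl$, observe that each of the $k-2$ internal backbone vertices has degree $3$, so any separator $S$ reducing $G_1$ to a linear forest must contain, for each such vertex, that vertex itself or one of its three neighbors. Since one vertex of $G_1$ can cover at most three such constraints in this sense, $|S| \geq (k-2)/3 = \Omega(n)$.

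For the opposite direction I would take the spider $G_2$ on $n = 3k+1$ vertices consisting of a center $c$ and $k$ rays of length three, with unit weights; denote the vertices of ray $i$ in order by $u_{i,1}, u_{i,2}, u_{i,3}$. Removing $c$ leaves a disjoint union of $k$ paths $u_{i,1} u_{i,2} u_{i,3}$, so $dl(G_2) = 1$. Since $c$ has degree $k$, any bijective labeling produces some edge of label difference at least $k/2$, giving $bw(G_2) = \Omega(n)$. In the skeleton $T_c^*$ each ray is retained down to depth $2$, so $|\mathrm{Cut}_c^1| = k$ and $\kappa(G_2) = \Omega(n)$. Finally, for $u = c$ and $r = 1$ the shortest paths $u_{i,1} u_{i,2} u_{i,3}$ of length exactly $2 \in (r, 2r]$ are pairwise vertex-disjoint, intersect $B_{2}(c)$, and lie inside $B_{4}(c)$, so any hitting set for either highway-dimension problem must contain a distinct vertex from each ray, yielding $hd_1(G_2), hd_2(G_2) \geq k = \Omega(n)$.

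The main technical obstacle is verifying $hd_1(G_1), hd_2(G_1) = O(1)$: this requires a short case analysis of which shortest paths of length greater than $r$ meet or are contained in $B_{4r}(u)$ (including paths that use leaf edges) and then exhibiting a constant-size hitting set by placing a bounded number of backbone vertices at positions spaced by roughly $r$ around $u$. The remaining bounds for both graphs reduce to direct inspection of degrees, vertex labelings, and skeleton structure, and the lower bound on $dl(G_1)$ is a simple double counting of covering constraints on the internal backbone vertices.
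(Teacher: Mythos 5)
Your argument is correct and establishes the claimed incomparabilities, but it takes a more unified route than the paper. The paper reuses the same caterpillar skeleton $\mathcal{C}_b$ but re-weights it separately for each parameter (backbone weight $2$, leaf weight $1$ for the skeleton dimension; backbone weight $5$, leaf weight $1$ for $hd_1$), and uses two different opposing examples (a plain star for bandwidth and skeleton dimension, a once-subdivided star for the highway dimensions, since the plain star has $hd_1=1$). You instead fix a single unit-weight caterpillar $G_1$ to cover all four forward directions and a single spider $G_2$ with rays of length three to cover all four reverse directions. This is arguably more economical, and your $G_2$ is essentially the paper's subdivided star with one extra subdivision, which has the advantage of simultaneously forcing $bw$, $\kappa$, $hd_1$ and $hd_2$ to be $\Omega(n)$.

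What you trade away is simplicity of the hardest verification. With unit weights, the claims $\kappa(G_1)=O(1)$ and $hd_1(G_1), hd_2(G_1)=O(1)$ are true but not as short to verify: for $\kappa$ you have to show that the leaf stub at $u_i$ survives into the skeleton only within distance roughly $(3-\mathrm{dist}(s,v_i))/3$ of $v_i$ (so only an $O(1)$ neighborhood of the source contributes side branches), and for $hd_1,hd_2$ you must carry out the hitting-set argument with vertices spaced at about $r-2$ along the backbone to absorb the two possible leaf pendants of a path of length $>r$. You flag this as the main technical obstacle but do not complete it. The paper's choice of backbone weight $5$ (versus leaf weight $1$) makes the backbone edges dominate the metric so sharply that the same hitting-set argument goes through with the explicit constant $7$, and the choice of backbone weight $2$ collapses the skeleton calculation to a single degree-$3$ branch point. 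So both approaches work; yours is cleaner in its choice of examples, theirs is cleaner in its estimates. One small remark: your lower bound $dl(G_1)\geq (k-2)/3$ via the domination argument is exactly right, and is in fact more careful than the paper's stated $dl(\mathcal{C}_b)=b$, which as written overstates the distance to linear forest (removing every third backbone vertex already suffices); fortunately only $dl=\Omega(n)$ is needed.
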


\begin{proof}
We will use the fact that the caterpillar tree $\mathcal{C}_b$ on $b$ backbone vertices of degree $3$ has a distance to linear forest of $b = \nicefrac{n}{2}-1$

\textbf{Bandwidth.}
The caterpillar $\mathcal{C}_b$ has bandwidth $2$. The star graph $S_n$ on $n$ vertices has a bandwidth of $\lfloor \nicefrac{n}{2} \rfloor$ and a distance to linear forest of $1$.

\textbf{Skeleton dimension.}
Consider the caterpillar $\mathcal{C}_b$ and choose the weight of an edge $\{u,v\}$ as $2$ if $u$ and $v$ are both backbone vertices and as $1$ otherwise. The skeleton $T_s^*$ of any vertex $s$ contains exactly one vertex of degree $3$ (the backbone vertex that is closest to $s$) and no vertex of degree more than $3$. Hence, the skeleton dimension is $3$.
The star graph $S_n$ on $n$ vertices with unit edge weights has a skeleton dimension of $n-1$ and a distance to linear forest of $1$.

\textbf{Highway dimensions.}
Consider the caterpillar $\mathcal{C}_b$ and choose the weight of an edge $\{u,v\}$ as $5$ if $u$ and $v$ are both backbone vertices and as $1$ otherwise. To bound the highway dimension $hd_1$, consider some node $v$ and let $r > 0$. Consider a maximum path $P \subseteq B_{4r}(v)$ containing only backbone vertices. It holds that $\vert P \vert \leq 8r$. We can greedily choose a set $S \subseteq P$ such that $\vert S \vert \leq 7$ and any subpath $\pi$ of $P$ of length $\vert \pi \vert \geq r-2$ is hit by $S$. Consider a path $\pi' \subseteq B_{4r}(v)$ that is not hit by $S$. The path $\pi'$ contains at most two edges of length $1$ incident to a leaf and a subpath of $P$ that has length less than $r-2$. Hence, $\pi'$ has length at most $r$. It follows that for any $v \in V$ and any $r > 0$ we can hit all shortest paths $\pi$ satisfying $\vert \pi \vert > r$ and $\pi \subseteq B_{4r}(v)$ with at most $7$ vertices, which means that $hd_1 \leq 7$.

Take a star graph with $l$ leaves, subdivide every edge by inserting one vertex and choose the weight of every edge in the resulting graph as $1$. We obtain a graph of distance to linear forest $1$ and highway dimension $hd_2 = l$, as every edge incident to a leaf is a shortest path of length $1$ intersecting the ball of radius $1$ around the central vertex.
\end{proof}

\section{Hardness Results}
In this section we show hardness for two problems in transportation networks. We first show that computing the highway dimension in NP-hard, even when using the most recent definition. Then we consider the \problem{$k$-Center} problem and show that for any $\epsilon > 0$, computing a $(2-\epsilon)$-approximation is NP-hard on graphs of skeleton dimension $\mathcal{O}(\log^2 n)$.

\subsection{Highway Dimension Computation}\label{sec:Hardness-hd3}
In~\cite{Fel15b} it was shown that computing the highway dimension $hd_1$ is NP-hard. The presented reduction is from \problem{Vertex Cover} and also works for $hd_2$. It does not directly carry over to $hd_3$ as the constructed graph has maximum degree $\Delta = n-1$ and we have $hd_3 \geq \Delta$. Still, using a slightly different reduction, we can show NP-hardness for the computation of $hd_3$.

\begin{theorem}
Computing the highway dimension $hd_3$ is NP-hard.
\end{theorem}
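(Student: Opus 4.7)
My plan is to reduce from \problem{Vertex Cover} on graphs of bounded degree, which remains NP-hard even on cubic graphs. The obstruction noted above is that the reduction of \cite{Fel15b} used for $hd_1$ and $hd_2$ produces a graph with a central hub vertex of degree $n-1$, and since $hd_3 \geq \Delta$, the maximum degree alone forces $hd_3 = \Omega(n)$, which drowns out any signal coming from the size of the vertex cover. Restricting the inputs to bounded degree and replacing the hub in the original construction by a low-degree skeleton (such as a long subdivided path or a balanced binary tree whose leaves are the $v_i$) should circumvent this issue.

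Given an instance $(G = (V,E), k)$ of bounded-degree \problem{Vertex Cover}, I would construct a weighted graph $H$ as follows. For every edge $e = \{u,v\} \in E$, introduce a constant-size gadget attached at $u$ and $v$ that creates a designated shortest path whose only reasonable hitters are $u$ or $v$. All other connections between the $v_i$ are routed through a bounded-degree auxiliary skeleton in place of the high-degree hub. Edge weights are tuned so that there is a distinguished radius $r^*$ and reference vertex $u^*$ for which the set of shortest paths that are $(r^*, 2r^*)$-close to $u^*$ is in bijection with $E$, while for every other choice of $(r, u)$ the set of paths to hit has constant size or is hittable by a constant set independent of $G$. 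A hitting set for the critical obligation at $(r^*, u^*)$ then corresponds to a vertex cover of $G$, so that $hd_3(H) \leq k + c$ holds if and only if $G$ admits a vertex cover of size at most $k$, for some constant offset $c$ arising from the auxiliary gadgets.

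The main obstacle is the $r$-witness flexibility in the definition of $hd_3$: a shortest path $\pi$ can become $r$-significant by prepending or appending a single vertex, and $(r,2r)$-closeness is measured on the extended witness rather than on $\pi$ itself. For $hd_1$ and $hd_2$ each path serves as its own witness, so the set of significant paths is tightly controlled by edge-weight and length constraints; for $hd_3$, extensions by one vertex may create additional paths that become $(r^*, 2r^*)$-close to $u^*$ and hence inflate the hitting set beyond the vertex cover. I expect to handle this by padding the endpoints of each edge gadget with degree-two chains of carefully chosen lengths, so that any single-vertex extension either stays within a length window already covered by the vertex cover obligation or yields a witness falling outside $B_{2r^*}(u^*)$ and therefore imposes no requirement. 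Verifying that these extensions are controlled uniformly across all choices of radius is the technical heart of the argument, and once it is established the membership-in-NP direction and the forward/backward correspondence between vertex covers and hitting sets are routine.
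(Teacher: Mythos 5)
Your diagnosis of the obstruction is correct: the reduction of Feldmann et al.\ has a hub of degree $n-1$, and $hd_3 \geq \Delta$, so the value is $\Omega(n)$ regardless of the vertex cover. You also correctly anticipate that one should restrict to bounded-degree \problem{Vertex Cover} instances. But the conclusion you draw --- that the hub must be replaced by a bounded-degree skeleton so that $hd_3(H) = |C| + c$ for a constant $c$ --- is not what the paper does, and in my view it is the harder road.

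The paper's fix is much more direct: it \emph{keeps} a single high-degree hub $x$. Given a bounded-degree instance $G=(V,E)$, it adds $x$ and, for each $v\in V$, a pendant $v^*$ with edges $\{v,v^*\}$ of weight $1$ and $\{v^*,x\}$ of weight $5$. The resulting $hd_3$ is then shown to be exactly $|C| + n + 1$. The $n+1$ term is unavoidable (at small radius around $x$, each singleton $\{x\}$ and each $\{v^*\}$ is an $r$-significant path hittable only by itself), but it is a \emph{known, additive} offset, so the vertex cover size is still perfectly recoverable: your premise that the degree lower bound ``drowns out'' the signal is where the reasoning goes wrong. The critical obligation arises at $r = 5/2$ around $x$: $B_5(x) = V'\setminus V$, each edge $\{u,v\}\in E$ is $(5/2,5)$-close to $x$ via the witness $u^*\,u\,v\,v^*$ of length $3 > 5/2$, and single vertices of $V$ are \emph{not} $5/2$-significant because a witness can extend a length-$0$ path by at most one vertex per side, giving length at most $2$. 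The bounded degree of $G$ is used only to show that around vertices other than $x$ and at radii $r < 5/2$, a constant-size hitting set suffices.

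Your replacement-skeleton approach has genuine unresolved technical problems beyond the $r$-witness bookkeeping you flag. If the $v_i$ are leaves of a low-degree tree or subdivided path, then for intermediate vertices $u$ in the skeleton and intermediate radii $r$, the set of $(r,2r)$-close paths will include an arbitrary \emph{subset} of the edges of $E$ (namely those between leaves within reach of $u$). You would then have to argue that the minimum hitting set for every such partial edge set is at most $|C| + c$, uniformly over all $(r,u)$. That is a strong statement about partial vertex covers of $G$ and is not automatic; nothing in the sketch controls it. This is precisely what the single hub buys you: there is one vertex $x$ from which the entire edge set is visible at once, and all other vertices see only a constant-size neighborhood. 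So the proposal is a plausible-looking sketch built on a misconception about what needs to be avoided, and it leaves the genuinely hard part (uniform control over intermediate $(r,u)$) untouched, whereas the paper's hub-plus-additive-offset trick disposes of the difficulty cleanly.
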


\begin{proof}
We present a reduction from \problem{Vertex Cover} on graphs with maximum degree $\Delta \leq 3$. Consider therefore a graph $G = (V, E)$ on $n$ nodes satisfying $\Delta \leq 3$. We construct a weighted graph $G' = (V',E')$ as follows. Add a single node $x$ and for any node $v \in V$, add a new node $v^*$ and the edges $\{v,v^*\}$ and $\{v^*, x\}$. For an edge $e \in E'$ choose edge weight $\ell(e) = 5$ if $e$ is incident to $x$ and $\ell(e) = 1$ otherwise.

Let $C$ be a minimum vertex cover of $G$.
We may assume that $\vert C \vert > ((\Delta+1)^6 - 1)/\Delta \in \mathcal{O}(1)$ as for any constant $c$ we can decide in polynomial time whether $G$ has a minimum vertex cover of size $c$. We show that $G'$ has highway dimension $hd_3 = \vert C \vert + n + 1$. Observe that $hd_3$ is still linear in $n$, but it may vary between $n+1$ and $2n$, depending on $\vert C \vert$.

Let $0 < r < \nicefrac{5}{2}$. Consider a node $u \in V'$. Let $N$ be the closed neighborhood of the ball around $u$ of radius $2r$, i.e.\ $v \in N$ iff $v \in B_{2r}(u)$ or $v$ is adjacent to a node $w \in B_{2r}(u)$. Clearly, $N$ is a hitting set for all shortest paths that are $(r,2r)$-close to $u$.
For $u \neq x$, the ball $B_{2r}(u)$ contains at most $\sum_{i=0}^4 (\Delta + 1)^i$ nodes, as $2r < 5$. Moreover, every node in $B_{2r}(u)$ has at most $\Delta + 1$ neighbors. Hence, $N$ is a hitting set of size $\sum_{i=0}^5 (\Delta + 1)^i = ((\Delta+1)^6 - 1)/\Delta$ for all shortest paths that are $(r,2r)$-close to $u$. For $u = x$, we have $N = V' \setminus V$ and therefore $\vert N \vert = n+1$.

Let $r = \nicefrac{5}{2}$. The ball around $x$ of radius $2r = 5$ is $B_{2r}(x) = V' \setminus V$. Any edge $\{u,v\} \in E$ is $(r,2r)$-close to $x$, as $u^*\,u\,v\,v^*$ is an $r$-witness. Moreover, any node $u \in V' \setminus V$ is a shortest path that is $(r,2r)$-close to $x$. However, a single node $u \in V$ is not $r$-significant, as it can only be extended to a witness of length $2 < r$. Hence, a shortest path $\pi$ is $(r,2r)$-close to $x$ iff and only if $\pi \in E$ or $\pi \in V' \setminus V$. Consider a smallest hitting set $H \subseteq V$ for all shortest paths that are $(r,2r)$-close to $x$. We have $(V' \setminus V) \subseteq H$, we $H$ needs to hit all paths that consist of one single node $v \in V' \setminus V$. Moreover, $H$ needs to hit all edges $E$. In other words, $H$ consists of $V' \setminus V$ and a vertex cover for $G$. Hence, the hitting set $H$ has size $\vert V' \setminus V \vert + \vert C \vert = \vert C \vert + n + 1$.

Observe that for $r=\nicefrac{5}{2}$, any $r$-significant shortest path in $G'$ is $(r,2r)$-close to $x$, as any node of $G'$ has a neighbor contained in $B_{2r}(x)$. Hence, for any node $u \in V'$, there is a hitting set for all shortest paths that are $(r,2r)$-close to $u$ of size at most $\vert C \vert + n + 1$. Moreover, for any node $u$ and any $r > \nicefrac{5}{2}$, a shortest path can only be $(r,2r)$-close to $u$, if it is also $(\nicefrac{5}{2},5)$-close to $u$. Hence, for any $u \in V'$ and any $r > \nicefrac{5}{2}$, for all shortest paths that are $(r,2r)$-close to $u$ there is a hitting set of size at most $\vert C \vert + n + 1$.

We conclude that the highway dimension of $G'$ is $hd_3 = \vert C \vert + n + 1$ if and only if $G$ has a minimum vertex cover of size $\vert C \vert$.
\end{proof}

\subsection{Hardness of Approximating $k$-Center}
In the \problem{$k$-Center} problem, we are given a graph $G = (V,E)$ with positive edge weights and the goal is to select $k$ center nodes $C \subseteq V$ while minimizing $\max_{u \in V} \min_{v \in C} \dist(u,v)$, that is the maximum distance from any node to the closest center node. A possible scenario is that one wants to place a limited number of hospitals on a map such that the maximum distance from any point to the closest hospital is minimized. 

We will prove that computing a $(2-\epsilon)$-approximation on graphs with low skeleton dimension is NP-hard. For that purpose, we first show the following lemma, which is a non-trivial extension of a result of Feldmann \cite{DBLP:journals/algorithmica/Feldmann19}. The aspect ratio of a metric $(X, \dist_X)$ is the ratio of the maximum distance between any pair of vertices in $X$ and the minimum distance.

\begin{lemma}\label{lem:distortion}
Let $(X, \dist_X)$ be a metric of constant doubling dimension $d$ and aspect ratio $\alpha$. For any $0 < \epsilon < 1$ it is possible to compute a graph $G = (X,E)$ in polynomial time that has the following properties:
\begin{alphaenumerate}
		\item \label{prop:distortion} for all $u,v \in X$ we have $\dist_X(u,v) \leq \dist_G(u,v) \leq (1 + \epsilon) \dist_X(u,v)$,
		\item \label{prop:hd} the graph $G$ has highway dimension $hd_2 \in \mathcal{O}((\log(\alpha)/\epsilon)^d)$, and
		\item the graph $G$ has skeleton dimension $\kappa \in \mathcal{O}((\log(\alpha)/\epsilon)^d)$,
\end{alphaenumerate}
\end{lemma}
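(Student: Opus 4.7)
The plan is to construct $G$ as a hierarchical net-based $(1+\epsilon)$-spanner of $(X,\dist_X)$, extending Feldmann's construction that already yields (a) and (b), and to argue that the same construction also yields the new bound (c) on the skeleton dimension. After rescaling so that the minimum interpoint distance is $1$, I set $L=\lceil \log_2 \alpha \rceil$ and inductively build nets $N_0 \supseteq N_1 \supseteq \cdots \supseteq N_L$, where $N_0=X$ and each $N_i$ is a maximal $2^i$-separated subset of $N_{i-1}$. For a sufficiently large constant $c_0$, I add an edge of weight $\dist_X(u,v)$ between every pair $u,v \in N_i$ with $\dist_X(u,v) \leq c_0 \cdot 2^i / \epsilon$. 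The doubling property bounds each node's degree per scale by $\mathcal{O}((c_0/\epsilon)^d)$, so $G$ has polynomially many edges and can be built in polynomial time.

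Property (a) is the standard spanner argument: given $u,v\in X$ with $D=\dist_X(u,v)$, climb through net representatives at scales $0,1,\ldots,\lceil \log_2(\epsilon D)\rceil$ and descend symmetrically, incurring at scale $i$ an error of $\mathcal{O}(2^i)$ that telescopes to a geometric sum bounded by $\epsilon D$. Property (b) follows directly from Feldmann's argument: for any $u$ and $r$, the set $N_{i^\star} \cap B_{\mathcal{O}(r)}(u)$ at the scale $2^{i^\star}=\Theta(\epsilon r)$ forms a hitting set for every shortest path of length in $(r,2r]$ crossing $B_{2r}(u)$, and it has size $\mathcal{O}((\log\alpha/\epsilon)^d)$ by the doubling property combined with the union over the $\mathcal{O}(\log\alpha)$ relevant scales.

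The main obstacle is property (c). Fix $s$ and $r>0$, let $i^\star$ be as above, and for each $v \in \mathrm{Cut}^r_s$ choose a descendant $w_v$ of $v$ in $T_s$ at $G$-distance exactly $3r/2$ from $s$; such $w_v$ exists in the geometric realization by the skeleton definition, and distinct $v$'s yield $w_v$'s in pairwise disjoint subtrees of $T_s$. The plan is to associate with each $v$ a net point $p_v \in N_{i^\star}$ lying on the $s$-to-$w_v$ path in $T_s$ at $G$-depth $\Theta(r)$: such $p_v$ must exist because any $T_s$-path of length $3r/2$ must traverse an edge of length $\Omega(\epsilon r)$ near depth $r$, whose endpoints lie in $N_{i^\star}$. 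All $p_v$ then lie in $B_{(1+\mathcal{O}(\epsilon))r}(s)$, which by doubling contains only $\mathcal{O}((\log\alpha/\epsilon)^d)$ points of $N_{i^\star}$. The delicate step, which I expect to be the hardest, is showing that $v \mapsto p_v$ is injective: if distinct $v_1, v_2 \in \mathrm{Cut}^r_s$ shared the same $p$, then $p$ would lie on two internally vertex-disjoint $T_s$-paths from $s$, and a short-cutting argument at scale $i^\star$ combined with the uniqueness of shortest paths assumed in the preliminaries should yield a contradiction. Combining injectivity with the doubling count then gives $|\mathrm{Cut}^r_s|=\mathcal{O}((\log\alpha/\epsilon)^d)$ uniformly in $r$, establishing~(c).
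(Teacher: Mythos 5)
Your plan diverges from the paper in a way that breaks the argument. You replace Feldmann's construction with a plain hierarchical net spanner in which every edge $\{u,v\}$ gets weight $\dist_X(u,v)$. That choice discards the one feature of Feldmann's graph that makes property~(\ref{prop:hd}) true: in Feldmann's construction a level-$i$ edge gets weight $(1+\epsilon(1-i/L))\dist_X(u,v)$, \emph{strictly decreasing} in the level, which forces every shortest path of length in $(2^i,2^{i+1}]$ to visit a hub in $Y_i$. With isometric edge weights that forcing disappears. Concretely, take $X$ to be a $\sqrt{n}\times\sqrt{n}$ grid with the $L_1$ metric (doubling dimension $2$, aspect ratio $\alpha\approx\sqrt{n}$). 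In your spanner every horizontal row is a chain of level-$0$ edges whose total weight equals the $L_1$ distance, so each row is a shortest $G$-path, and near the centre there are $\Omega(\sqrt n)$ vertex-disjoint such paths of length $\Theta(r)$ intersecting $B_{2r}(u)$. Hence $hd_2\in\Omega(\sqrt n)$, far above the claimed $\mathcal{O}((\log\alpha/\epsilon)^d)$. The same defect sinks your sketch for (c): the step ``any $T_s$-path of length $3r/2$ must traverse an edge of length $\Omega(\epsilon r)$ near depth $r$'' is false when shortest paths can be made of arbitrarily short level-$0$ edges. So neither (b) nor (c) ``follows directly''; the level-dependent penalization in the edge weights is the load-bearing ingredient, and it cannot be replaced by an unweighted net spanner.

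The paper instead keeps Feldmann's hub sets $Y_i$ and his weights $(1+\epsilon(1-i/L))\dist_X(u,v)$ exactly (so (a) and (b) are inherited), and then only \emph{prunes} edges: it deletes every level-$i$ edge $\{u,v\}$ with $\dist_X(u,v)>2^{i+1}$, proving first that no such edge was a shortest path, so the shortest-path structure (and hence (a), (b)) is preserved. The pruning is what gives (c), because Feldmann's original graph has $\binom{|X|}{2}$ edges (as $Y_0=X$) and thus maximum degree $|X|-1$, which is a lower bound on the skeleton dimension. After pruning, the paper shows that every vertex $v\in\mathrm{Cut}_s^r$ has a \emph{descendant} $v'$ in $T_s$ lying in $Y_{i-2}\cap B_{2^{i+3}}(s)$ for the appropriate scale $i$; injectivity of $v\mapsto v'$ is then automatic because the skeleton is a tree with disjoint subtrees below distinct cut vertices, and the count follows from the doubling packing bound. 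This sidesteps the ``delicate step'' you flag, where you try to pick $p_v$ ``at depth $\Theta(r)$'' (possibly an ancestor of $v$, which need not be injective across different $v$) and leave the contradiction argument unspecified. If you want to salvage your route you would need to (i) re-introduce level-penalized edge weights, (ii) re-prove the hub-hitting property for them, and (iii) switch your choice of $p_v$ from ``near depth $r$'' to a genuine descendant to get injectivity for free.
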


\begin{proof}
In~\cite{DBLP:journals/algorithmica/Feldmann19} it was shown, how to compute a Graph $H$ that satisfies properties \ref{prop:distortion} and \ref{prop:hd}. This was done by choosing so called \emph{hub sets} $Y_i \subseteq X$ for all $i = 0, 1, \dots, L = \lceil \log_2 \alpha \rceil$ such that in $H$ any shortest path in the range $(2^i, 2^{i+1}]$ contains some node from $Y_i$. Moreover, for any vertex $u \in X$ and any $i$ there is a hub $v \in Y_i$ satisfying $\dist_X(u,v) \leq \frac{\epsilon 2^{i-2}}{(1+\epsilon)^2 L}$ and for any two distinct hubs $u,v \in Y_i$ we have $\dist_X(u,v) > \frac{\epsilon 2^{i-3}}{(1+\epsilon)^2 L}$. The hub sets form a hierarchy, i.e.\ $Y_i \supseteq Y_j$ for all $i < j$. In the computed graph $H$, there is an edge between two vertices $u $ and $v$ of length $(1 + \epsilon(1-i/L)) \dist_X(u,v)$ if and only if $i = \max\{ j \mid \{u,v\} \subseteq Y_j\}$. We call such an edge also a \emph{level $i$ edge}. Moreover, $Y_0 = X$ is chosen and hence the graph $H$ has $\vert X \vert \choose 2$ edges, which implies a maximum degree of $(\vert X \vert - 1)$, which is a lower bound for the skeleton dimension.

However, we can observe that many of these edges are not necessary as they are not a shortest path. In particular, we can remove all level $i$ edges $\{u,v\}$ of $H$ satisfying $\dist_X(u,v) > 2^{i+1}$, which yields a graph $G$. The following claim shows that this does not affect the shortest path structure of the graph.

\begin{claim}
The graph $G$ fulfils properties \ref{prop:distortion} and \ref{prop:hd}.
\end{claim}

\begin{claimproof}
We show that the constructed graph $G$ has exactly the same shortest paths as $H$. This implies that $G$ fulfils properties \ref{prop:distortion} and \ref{prop:hd}.
Consider an edge $\{u,v\}$ that was removed from $H$. We claim that $\{u,v\}$ is longer than the shortest path from $u$ to $v$. As the edge was removed, we have $\dist_X(u,v) > 2^{i+1}$ where $i = \max\{j \mid \{u,v\} \subseteq Y_j\}$ is the level of $\{u,v\}$.
The length of $\{u,v\}$ in $H$ is $d_{uv} = (1 + \epsilon(1-i/L)) \dist_X(u,v)$. Any shortest path longer than $2^{i+1}$ contains some hub from $Y_{i+1}$. Hence, if $u,v \not \in Y_{i+1}$, the edge $\{u,v\}$ cannot be a shortest path in $H$ and we are done. Assume now that $u \in Y_{i+1}$. This implies $v \in Y_i \setminus Y_{i+1}$ as $\{u,v\}$ has level $i$. A property of the hub set $Y_{i+1}$ is that there is a hub $w \in Y_{i+1}$ satisfying $\dist_X(v,w) \leq \frac{\epsilon 2^{i-1}}{(1+\epsilon)^2 L}$.

As $u,w \in Y_{i+1}$, the edge $\{u,w\}$ has in $H$ length at most
\begin{align*}
d_{uw} & = \left(1 + \epsilon\left(1-\frac{i+1}{L}\right)\right) \dist_X(u,w) \\
       & \leq \left(1 + \epsilon\left(1-\frac{i+1}{L}\right)\right) (\dist_X(u,v) + \dist_X(v,w)) \\
	   & \leq \left(1 + \epsilon\left(1-\frac{i+1}{L}\right)\right) \dist_X(u,v) + 2 \cdot \dist_X(v,w).
\end{align*}
This means, that $H$ contains a path from $u$ to $w$ of length at most $d_{uw}$. Moreover, property~\ref{prop:distortion} implies that $H$ contains a $v$-$w$-path of length at most $d_{wv} = 2 \cdot \dist_X(v,w)$. It follows that by concatenating the shortest paths from $u$ to $w$ and from $w$ to $v$, we obtain a $u$-$v$-path whose length is upper bounded by
\begin{align*}
	d_{uw} + d_{wv} & \leq \left(1 + \epsilon\left(1-\frac{i+1}{L}\right)\right) \dist_X(u,v) + 2 \cdot \dist_X(v,w) + 2 \cdot \dist_X(v,w) \\
	& = \left(1 + \epsilon\left(1-\frac{i}{L}\right)\right) \dist_X(u,v) - \frac{\epsilon}{L} \cdot \dist_X(u,v) + 4 \cdot \dist_X(v,w)\\
	& \leq d_{uv} - \frac{\epsilon}{L} \cdot \dist_X(u,v) + 4 \cdot \dist_X(v,w) \\
	& < d_{uv} - \frac{\epsilon}{L} 2^{i+1} + 4 \cdot \frac{\epsilon 2^{i-1}}{(1+\epsilon)^2 L} \\
	& = d_{uv} - \frac{\epsilon}{L} \cdot \left(2^{i+1} - \frac{2^{i+1}}{(1+\epsilon)^2}\right) \\
	& < d_{uv}.
\end{align*}

Hence, the edge $\{u,v\}$ is longer than the shortest path from $u$ to $v$. 
\end{claimproof}

It follows that if $\{u,v\}$ is a long edge in $G$, then both $u$ and $v$ must be important hubs.

\begin{claim}\label{claim:length-level}
In $G$, for any edge $\{u,v\}$ of length more than $2^i$ we have $u,v \in Y_{i-1}$.
\end{claim}

\begin{claimproof}
Consider an edge $\{u,v\}$ of level $j \leq i-2$. As $\{u,v\}$ was not removed from $H$, we have $\dist_X(u,v) \leq 2^{j+1} \leq 2^{i-1}$. An upper bound of $2^i$ on the length of $\{u,v\}$ follows, as the length of $\{u,v\}$ was chosen as $(1 + \epsilon(1-j/L)) \cdot \dist_X(u,v) < 2 \cdot \dist_X(u,v) \leq 2^i$.
\end{claimproof}

It remains to bound the skeleton dimension of $G$. For some vertex $s$ and radius $r$ consider the set $\mathrm{Cut}_s^r$ in the skeleton of $s$ at radius $r$ and choose some vertex $v \in \mathrm{Cut}_s^r$.
Let $w$ be a furthest descendant of $v$ in the shortest path tree of $s$ and choose $i$ such that for the distance $r' = \dist_{G^*}(v,w)$ we have $2^i < r' \leq 2^{i+1}$. As $v$ is contained in the skeleton, it follows that for the distance $r = \dist_{G^*}(s,v)$ we have $r \leq 2r' \leq 2^{i+2}$.

Choose an edge $\{\parent{v}, \child{v}\}$ of $G$ satisfying $\dist_{G^*}(\parent{v},\child{v}) = \dist_{G^*}(\parent{v},v) + \dist_{G^*}(v,\child{v})$. In other words, $\parent{v}$ and $\child{v}$ are the parent and child node of $v$ when considering only nodes from the (discrete) graph $G$. We claim that the vertex $v$ has a descendant in the shortest path tree of $s$ which is contained in the hub set $Y_{i-2}$. To prove this, observe that $\dist_G(\parent{v}, w) \geq r' > 2^i$. This implies that (i) $\dist_G(\parent{v}, \child{v}) > 2^{i-1}$ or (ii) $\dist_G(\child{v}, w) > 2^{i-1}$.  Consider case (i). As the edge $\{\parent{v}, \child{v}\}$ is contained in $G$ and has length more than $2^{i-1}$, it follows from Claim~\ref{claim:length-level} that $\child{v} \in Y_{i-2}$. In case (ii), it follows from $\dist_G(\child{v}, w) > 2^{i-1}$ that the shortest path from $\child{v}$ to $w$ muss pass through a hub from $Y_{i-1} \subseteq Y_{i-2}$.

Denote the ball $B_r(s)$ in $G$ simply by $B_r$. It holds that $\dist_G(s,w) = r + r' < 2^{i+3}$. This means that every vertex $v \in \mathrm{Cut}_s^r$ has a descendant $v'$ in the shortest path tree which is contained in $Y_{i-2} \cap B_{2^{i+3}}$. As the skeleton of $s$ is a tree, for all distinct vertices $u,v \in \mathrm{Cut}_s^r$ we have $u' \neq v'$. Hence, we have $\vert \mathrm{Cut}_s^r \vert \leq \vert Y_{i-2} \cap B_{2^{i+3}} \vert$.

It was shown that if $(X, \dist_X)$ is a metric with doubling dimension $d$, for any subset $X' \subseteq X$ of aspect ratio $\beta$, the size of $X'$ is bounded by $2^{d \lceil \log_2 \beta \rceil} \leq (2 \beta)^{d}$ \cite{Gupta2003}. As the diameter of the ball $B_{2^{i+3}}$ is at most $2^{i+4}$ (which according to property~\ref{prop:distortion} also bounds the diameter of the ball w.r.t.\ the metric $(X, \dist_X)$) and any two distinct hubs $u, v \in Y_{i-2}$ have distance $\dist_X(u,v) > \frac{\epsilon 2^{i-5}}{(1+\epsilon)^2L}$, the aspect ratio of $Y_{i-2} \cap B_{2^{i+3}}$ w.r.t. $\dist_X$ is $\beta < 2^{i+4} / \frac{\epsilon 2^{i-5}}{(1+\epsilon)^2L} = 2^9 (1+\epsilon)^2 L / \epsilon$. It follows that $\vert Y_{i-2} \cap B_{2^{i+3}} \vert \leq (2 \cdot 2^9 (1+\epsilon)^2 L / \epsilon)^{d}$. As $\epsilon < 1$, the size of any $\mathrm{Cut}_s^r$ is bounded by $(2^{12} L / \epsilon)^{d}$ and we obtain that the skeleton dimension of $G$ is $\kappa \in \mathcal{O}((L/\epsilon)^d) = \mathcal{O}((\log(\alpha)/\epsilon)^d)$.
\end{proof}

Feldmann~\cite{DBLP:journals/algorithmica/Feldmann19} observed that due to a result of Feder and Greene~\cite{Feder1988}, it is NP-hard for any $\epsilon >0$ to compute a $(2-\epsilon)$-approximation for $k$-Center on graphs of doubling dimension $4$ and aspect ratio at most $n$. Lemma~\ref{lem:distortion} hence implies that it is NP-hard to compute a $(2-\epsilon)$-approximation if the skeleton dimension is in $\mathcal{O}(\log^2 n)$. It remains open whether this also holds for $\kappa \in o(\log^2 n)$ and in particular for constant skeleton dimension. 

It was also shown, that under the exponential time hypothesis (ETH) it is not possible to compute a $(2-\epsilon)$-approximation for $k$-Center on graphs of highway dimension $hd_2$ in time $2^{2^{o(\sqrt{hd_2})}} \cdot n^{\mathcal{O}(1)}$ \cite{DBLP:journals/algorithmica/Feldmann19}. Analogously, Lemma~\ref{lem:distortion} implies a bound of $2^{2^{o(\sqrt{\kappa})}} \cdot n^{\mathcal{O}(1)}$ for skeleton dimension $\kappa$. We summarize our findings in the following theorem.

\begin{theorem}
For any $\epsilon > 0$, it is NP-hard to compute a $(2-\epsilon)$-approximation for the $k$-Center problem on graphs of skeleton dimension $\kappa \in \mathcal{O}(\log^2 n)$. Assuming ETH there is no $2^{2^{o(\sqrt{\kappa})}} \cdot n^{\mathcal{O}(1)}$ time algorithm that computes a $(2-\epsilon)$-approximation.
\end{theorem}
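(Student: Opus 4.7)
The plan is to reduce from the Feder--Greene hard instances of $k$-Center on metrics of constant doubling dimension: for any $\delta>0$ it is NP-hard and, by Feldmann's refinement, under ETH impossible in $2^{2^{o(\sqrt{hd_2})}}\cdot n^{\mathcal{O}(1)}$ time to $(2-\delta)$-approximate $k$-Center on metric instances $(X,\dist_X)$ of constant doubling dimension with aspect ratio at most $n$. Fix $\epsilon>0$, set $\delta=\epsilon/2$, and choose a distortion parameter $\epsilon'>0$ small enough that $(1+\epsilon')(2-\epsilon)<2-\delta$ (for instance $\epsilon'=\epsilon/4$ suffices for small $\epsilon$).

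First I would apply Lemma~\ref{lem:distortion} to such a hard instance $(X,\dist_X)$ with this $\epsilon'$. This produces in polynomial time a graph $G=(X,E)$ whose shortest path metric satisfies $\dist_X\leq \dist_G\leq(1+\epsilon')\dist_X$ and has skeleton dimension $\kappa\in\mathcal{O}((\log n/\epsilon')^{2})=\mathcal{O}(\log^{2} n)$, since $\epsilon'$ is a constant and the doubling dimension is a constant. Next I would observe that for any set of centers $C\subseteq X$ the $k$-Center cost on $G$ and on $(X,\dist_X)$ differ by at most the factor $(1+\epsilon')$, so in particular $\mathrm{OPT}_G\leq(1+\epsilon')\mathrm{OPT}_X$. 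A polynomial-time $(2-\epsilon)$-approximation on $G$ would therefore return a center set whose cost on the original metric is at most $(1+\epsilon')(2-\epsilon)\mathrm{OPT}_X<(2-\delta)\mathrm{OPT}_X$, contradicting the Feder--Greene NP-hardness and establishing the first statement.

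For the ETH bound I would re-use exactly the same construction. Property~\ref{prop:hd} of Lemma~\ref{lem:distortion} guarantees that the graph $G$ simultaneously satisfies $hd_2\in\mathcal{O}(\log^{2} n)$, so on these specific instances $\sqrt{\kappa}$ and $\sqrt{hd_2}$ are of the same order $\Theta(\log n)$. Any hypothetical $2^{2^{o(\sqrt{\kappa})}}\cdot n^{\mathcal{O}(1)}$-time $(2-\epsilon)$-approximation algorithm for skeleton-dimension-bounded graphs would, when applied to these instances and combined with the approximation-preservation argument above, yield a $(2-\delta)$-approximation on the underlying metric in time $2^{2^{o(\sqrt{hd_2})}}\cdot n^{\mathcal{O}(1)}$, contradicting Feldmann's ETH lower bound.

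The main conceptual obstacle is that $\kappa$ and $hd_2$ are in general incomparable, as the earlier sections of this paper emphasize, so the ETH substitution from $hd_2$ to $\kappa$ cannot be done generically. The point is that on the specific graphs produced by Lemma~\ref{lem:distortion} both parameters are polylogarithmic in $n$ of the same order, which is exactly what makes the substitution valid for this family of instances; the remaining work is just the routine bookkeeping needed to pick the distortion constant $\epsilon'$ so that the two approximation factors compose into a quantity strictly below $2$.
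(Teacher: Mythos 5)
Your proof is correct and takes essentially the same route as the paper's, which is terse: the paper simply cites Feder--Greene for hardness on constant doubling metrics of aspect ratio at most $n$ and then states that Lemma~\ref{lem:distortion} "hence implies" both claims, leaving the approximation-transfer and ETH bookkeeping implicit. You supply that bookkeeping explicitly: the $(1+\epsilon')$-distortion in property~\ref{prop:distortion} transfers centre costs between $G$ and $(X,\dist_X)$ in the right direction (since $\dist_X\leq\dist_G$, the returned centres are at most $(1+\epsilon')(2-\epsilon)\cdot\mathrm{OPT}_X$ on the original metric), and the choice $\epsilon'=\epsilon/4$ keeps the composed factor strictly below $2$. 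For the ETH claim, the point you make---that $\kappa$ and $hd_2$ are generically incomparable but on the Lemma~\ref{lem:distortion} instances both are $\mathcal{O}(\mathrm{polylog}\,n)$, so a $2^{2^{o(\sqrt{\kappa})}}\cdot n^{\mathcal{O}(1)}$ algorithm would run in time $2^{2^{o(\log n)}}\cdot n^{\mathcal{O}(1)}$ on exactly the family where Feldmann's lower bound forbids this---is precisely the reasoning the paper is eliding. One small imprecision: you write $\mathcal{O}((\log n/\epsilon')^{2})$, implicitly taking doubling dimension $d=2$; the paper's prose says "doubling dimension $4$" for the Feder--Greene instances, which via Lemma~\ref{lem:distortion} would give $\mathcal{O}(\log^{4}n)$, not $\mathcal{O}(\log^{2}n)$. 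This is an inconsistency in the paper itself (the $L_1$/$L_\infty$ plane has doubling dimension $2$, so your value is the right one), not an error in your argument.
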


\section{Conclusion and Future Work}
We showed that the skeleton dimension, the highway dimension (when defined as in \cite{abr10} or \cite{abr11}) and several other graph parameters are pairwise incomparable. Nevertheless, the skeleton dimension is upper bounded by the max leaf number and lower bounded through the maximum degree and the doubling dimension.

However, for the highway dimensions $hd_1$ and $hd_2$ there are still no tight upper or lower bounds. Using a grid graph and a complete graph, it can be shown that they are not even comparable to the minimum degree or the maximum clique size, which are lower bounds for a large number of graph parameters. Bauer et al.\ showed, that for any unweighted graph there are edge weights such that $hd_2 \geq (pw - 1) / (\log_{3/2} \vert V \vert + 2)$ where $pw$ is the pathwidth~\cite{Bauer2016}. It remains open whether this bound is tight.

It turned out that computing the highway dimension is NP-hard for all three different definitions used in the literature. Still, knowing the highway dimension of real-world networks will give further insight in the structure of transportation networks and hence it is worthwhile to study whether there are FPT algorithms to compute the highway dimension and to what extent it can be approximated.

We proved that on graphs of skeleton dimension $\mathcal{O}(\log^2 n)$ it is not possible to beat the well-known $2$-approximation algorithm by Hochbaum and Shmoys for \problem{$k$-Center}. Yet, the experimental results reported in \cite{blu18} indicate that the skeleton dimension of real-world networks might actually be a constant independent of the size of the network. This raises the question whether there is a $(2-\epsilon)$-approximation algorithm for graphs of constant skeleton dimension.

\bibliographystyle{plainurl}
\bibliography{bibliography}
\end{document}